\documentclass[a4paper,UKenglish, cleveref, autoref, thm-restate, nosubfigcap, nolineno]{socg-lipics-v2021}

%----------------------- Additional Packages -----------------------------

\usepackage{cite}
\usepackage{booktabs}
\usepackage{nicematrix}
\usepackage{pdfpages}
\usepackage{tikz}
\usepackage{amsmath}

%----------------------- Macros and Definitions --------------------------

\newtheorem*{problem*}{Problem}

\renewcommand{\epsilon}{\varepsilon}

%-------------------------------------------------------------------------

\bibliographystyle{plainurl}

\title{Improved Lower Bound on the Number of Pseudoline Arrangements\footnote{
This manuscript was accepted at SoCG'24
and will be merged
with Fernando Cortés Kühnast, Stefan Felsner and Manfred Scheucher's manuscript
``An Improved Lower Bound on the Number of Pseudoline Arrangements'' for the proceedings}}

\author{Justin Dallant}{Université libre de Bruxelles, Belgium} {Justin.Dallant@ulb.be}{https://orcid.org/0000-0001-5539-9037}{Supported by the French Community of Belgium via the funding of a FRIA grant.}

\authorrunning{J. Dallant}

\Copyright{Justin Dallant}

\ccsdesc[500]{Theory of computation~Computational geometry}
\ccsdesc[500]{Mathematics of computing~Enumeration}

%\relatedversion{}
%\relatedversiondetails[cite=FullVersionArxiv]{Full Version}{}

\keywords{pseudoline arrangement, counting, lower bound, construction}

\acknowledgements{}%optional

\nolinenumbers

\hideLIPIcs

\begin{document}

\maketitle

\begin{abstract}
    We show that for large enough $n$, the number of non-isomorphic pseudoline arrangements of order $n$ is greater than $2^{c\cdot n^2}$ for some constant $c > 0.2604$, improving the previous best bound of $c>0.2083$ by Dumitrescu and Mandal (2020). Arrangements of pseudolines (and in particular arrangements of lines) are important objects appearing in many forms in discrete and computational geometry. They have strong ties for example with oriented matroids, sorting networks and point configurations. Let $B_n$ be the number of non-isomorphic pseudoline arrangements of order $n$ and let $b_n := \log_2(B_n)$. The problem of estimating $b_n$ dates back to Knuth, who conjectured that $b_n \leq 0.5n^2 + o(n^2)$ and derived the first bounds $n^2/6-O(n) \leq b_n \leq 0.7924(n^2+n)$. Both the upper and the lower bound have been improved a couple of times since. For the upper bound, it was first improved to $b_n < 0.6988n^2$ (Felsner, 1997), then $b_n < 0.6571 n^2$ by Felsner and Valtr (2011), for large enough $n$. 
    In the same paper, Felsner and Valtr improved the constant in the lower bound to $c> 0.1887$, which was subsequently improved by Dumitrescu and Mandal to $c>0.2083$.
    Our new bound is based on a construction which starts with one of the constructions of Dumitrescu and Mandal and breaks it into constant sized pieces. We then use software to compute the contribution of each piece to the overall number of pseudoline arrangements. This method adds a lot of flexibility to the construction and thus offers many avenues for future tweaks and improvements which could lead to further tightening of the lower bound.
\end{abstract}

\section{Introduction}

\begin{figure}[t]
    \centering
    \includegraphics[width=0.4\textwidth]{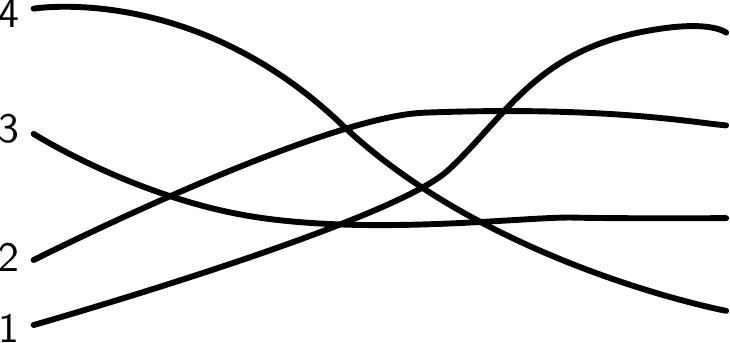}
    \caption{A pseudoline arrangement of order $4$ with its natural labeling.}
    \label{fig:labeling}
\end{figure}

A pseudoline arrangement of order $n$ is a collection of $n$ unbounded $x$-monotone simple curves in the Euclidean plane such that every pair of curves intersects exactly once, and crosses at their intersection. It is said to be simple if no three curves meet at a point. Every pseudoline arrangement induces a natural labeling on the curves composing it, given by the order in which they appear, from bottom to top, to the left of the first intersection (see Figure \ref{fig:labeling}). We say two pseudoline arrangements are isomorphic if they can be mapped to each other by a homeomorphism of the plane preserving this natural labeling\footnote{There are notions of pseudoline arrangements and of their isomorphism in the literature which differ from the one we use here (e.g. working in the projective plane, having explicitly labeled curves or not requiring the mappings to preserve the labelings). In terms of counting the number of non-isomorphic pseudoline arrangements of order $n$, these differences only affect lower order factors and have no bearing on the results of this paper.} (or more intuitively, they are ``combinatorially identical'').

Isomorphism classes of (simple) pseudoline arrangements (or variations thereof) appear in many different forms in discrete geometry and beyond. Among those are configurations of points, wiring diagrams, zonotopal tilings, reorientation classes of oriented matroids of rank 3 (see e.g. \cite{Handbook} and \cite{FelsnerBook2004} for more general treatments of the topic).

\begin{table}[ht]
    \centering
    \begin{tabular}{c|r}
        $n$ & $B_n$  \\ \hline\hline
        $1$ & $1$ \\ \hline
        $2$ & $1$ \\ \hline
        $3$ & $2$ \\ \hline
        $4$ & $8$ \\ \hline
        $5$ & $62$ \\ \hline
        $6$ & $908$ \\ \hline
        $7$ & $24\ 698$ \\ \hline
        $8$ & $1\ 232\ 944$ \\ \hline
        $9$ & $112\ 018\ 190$ \\ \hline
        $10$ & $18\ 410\ 581\ 880$ \\ \hline
        $11$ & $5\ 449\ 192\ 389\ 984$ \\ \hline
        $12$ & $2\ 894\ 710\ 651\ 370\ 536$ \\ \hline
        $13$ & $2\ 752\ 596\ 959\ 306\ 389\ 652$ \\ \hline
        $14$ & $4\ 675\ 651\ 520\ 558\ 571\ 537\ 540$ \\ \hline
        $15$ & $14\ 163\ 808\ 995\ 580\ 022\ 218\ 786\ 390$ \\ \hline
        $16$ & $76\ 413\ 073\ 725\ 772\ 593\ 230\ 461\ 936\ 736$ \\ \hline
    \end{tabular}
    \caption{The known values of $B_n$. The values for $n\leq 9$ are from \cite[p.\ 35]{Knuth1992}. The values for $n=10,11,12$ are from \cite{Felsner97}, \cite{Yamanaka2010} and \cite{oeis} respectively. The value for $n=16$ is from \cite{Rote2023}.}
    \label{tab:Bn}
\end{table}

Let $B_n$ be the number of non-isomorphic simple pseudoline arrangements of order $n$, and let $b_n := \log_2 B_n$. The values of $B_n$ for $1\leq n \leq 16$ are known exactly and can be found in Table \ref{tab:Bn}. Here, we are interested in bounding $b_n$ as a function of $n$. In his book \cite{Knuth1992}, Knuth conjectured that $b_n \leq 0.5n^2 + o(n^2)$ and derived the first bounds $n^2/6-O(n) \leq b_n \leq 0.7924n^2+O(n)$. The upper bound was derived by showing that the number of cutpaths in a wiring diagram of order $n$ (which corresponds to the number of combinatorially distinct ways to insert a new curve in a pseudoline arrangement of order $n$) is at most $3^n$, yielding the bound $B_n \leq 3^{\binom{n}{2}} \leq 2^{0.7924n^2+O(n)}$. The lower bound was obtained by a recursive construction, yielding the recurrence $B_n \geq 2^{n^2/8-n/4}B_{n/2}.$ A more geometric (albeit with smaller constant) recursive construction was later given by Matou\v{s}ek in his book \cite{Matousek2002}. We review this latter construction below. 

The upper bound was subsequently improved (for large enough $n$) to $b_n < 0.6988n^2$ by Felsner \cite{Felsner97} through a clever encoding of the arrangement, then to $b_n < 0.6571 n^2$ by Felsner and Valtr \cite{FelsnerValtr2011}, through an improved bound on the number of cutpaths.  In the same paper, the lower bound was improved to $b_n \geq 0.1887n^2$ by a recursive construction making use of the number of rhombic tilings of a centrally symmetric hexagon. Finally, through constructions which can be seen as generalizations of that of Matou\v{s}ek, Dumitrescu and Mandal \cite{Dumitrescu2020} improved the lower bound to $b_n \geq 0.2083n^2$.

The present work builds on one of the constructions of Dumitrescu and Mandal (which we briefly review in the beginning of Section \ref{sec:construction}), by making the choices throughout less local, and with the use of computer software to count the number of choices possible in each ``piece'' of the construction. Our construction yields $b_n \geq 0.2604n^2$, for large enough $n$.

\paragraph*{Matou\v{s}ek's construction.}
\begin{figure}
    \centering
    \includegraphics[width=0.6\linewidth]{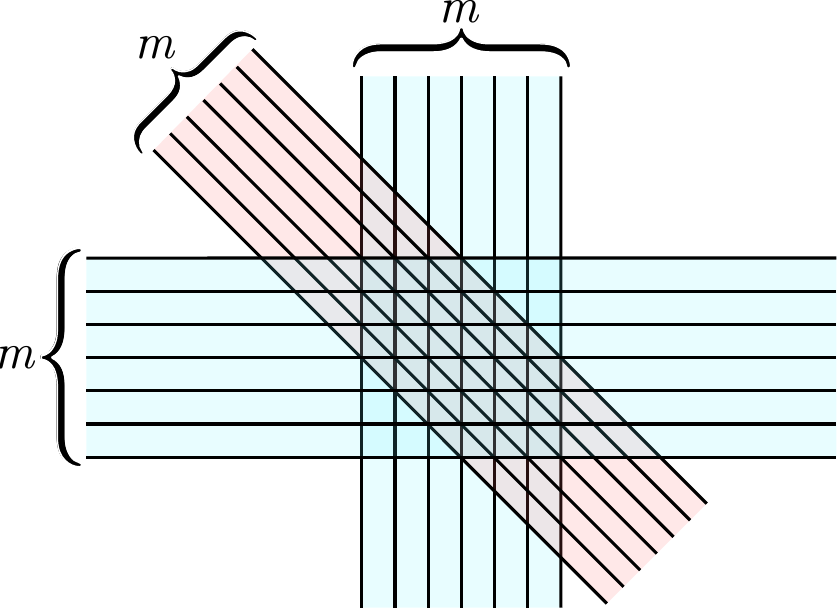}
    \caption{Matou\v{s}ek's lower bound construction.}
    \label{fig:matousek_bound}
\end{figure}

Matou\v{s}ek \cite[Sec. 6.2]{Matousek2002} gave a simple construction yielding a lower bound of $2^{n^2/8}$ on the number of pseudoline arrangements of order $n$. One can describe this construction by starting with $2$ bundles of lines, each consisting of $m = n/3$ horizontal (resp. vertical) lines with unit distance between consecutive lines. This defines a regular grid of $m^2$ points. We then insert a third bundle of $m$ lines with slope $-1$, which together pass through $3(m^2+1)/4$ points of the grid, assuming $m$ is odd (see Figure \ref{fig:matousek_bound}). For each of these, we have two choices leading to a different arrangement: we can bend the line to pass either above or below the intersection.\footnote{As described, the obtained arrangement does not consist of $x$-monotone curves, but this can be remedied simply by rotating the grid slightly. We choose not to do so for consistency with the later constructions in the paper.} It then remains to bound the number of ways in which the lines within each of the bundles can be arranged with respect to each-other. This leads to the recurrence
\[B_{3m} \geq 2^{3m^2/4} (B_m)^3,\]
which solves to $B_n\geq 2^{n^2/8}$.
\section{Pseudochord arrangements}

\begin{figure}
    \centering
    \includegraphics[width=0.45\linewidth]{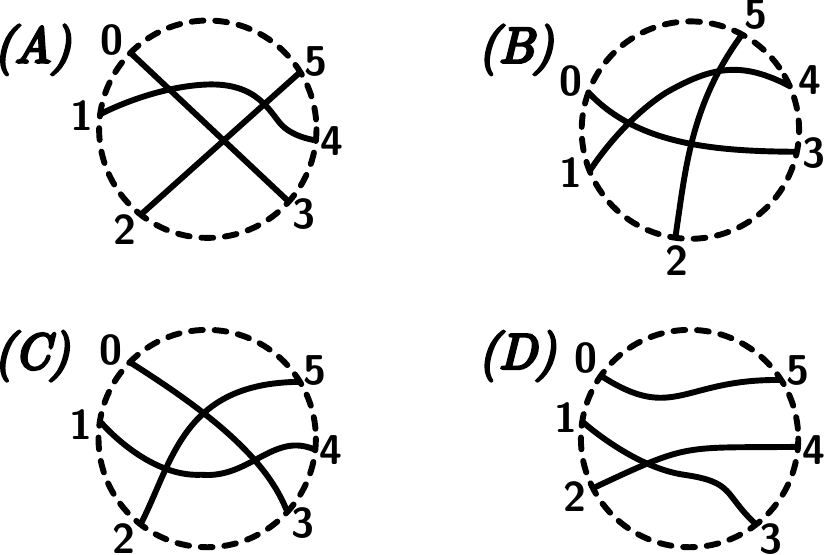}
    \caption{Illustrations $(A)$ and $(B)$ represent two different embeddings of the same pseudochord arrangement, while $(C)$ is a different arrangement with the same matching. Illustration $(D)$ is an embedding of a matching distinct from the three others.}
    \label{fig:diagrams}
\end{figure}

For our construction, we will work with a generalization of pseudoline arrangements, which we call pseudochord arrangements. Note that similar objects have been considered previously, under the name of \emph{weak pseudoline arrangements} \cite{Fraysseix2003, Eppstein2013} and \emph{partial pseudoline arrangements} \cite{FelsnerValtr2011,Rote2023}. 

\begin{definition}
Consider a simple closed curve $C$ in the plane. Consider also $2k$ distinct points on $C$, labeled $0$ to $2k-1$, in counter-clockwise order around $C$. Consider a perfect matching $\{a_1,b_1\}, \{a_2,b_2\}, \ldots \{a_k,b_k\}$ on the $2k$ points. For each pair in the matching, add a simple curve with the corresponding points as endpoints, labeled by these endpoints, with the following restrictions:
\begin{itemize}
    \item the curves lie inside the bounded side of $C$, with the exception of their two endpoints (which lie on $C$);
    \item two curves intersect at most once, and if they do, they cross at the point of intersection.
\end{itemize}

We call this set of curves (including $C$), together with the $2k$ labeled points an embedded pseudochord arrangement of order $k$. We refer to $C$ as the bounding curve, to the $k$ other curves as (pseudo-)chords, and to the $2k$ points as endpoints (of the corresponding chords). If no three chords intersect at a common point, the arrangement is said to be simple.
\end{definition}

We can give a canonical orientation to each chord in an embedded pseudochord arrangement based on the labels of the endpoints: for a chord connecting two endpoints labeled $a$ and $b$, with $a<b$, we orient the chord from $a$ to $b$. This allows us to unambiguously distinguish between the two regions inside the bounding curve $C$ on both sides of the chord, which we call ``above'' and ``below'' the chord. 

\begin{definition}
    We call chirotope of a pseudochord arrangement the mapping which to each triplet $(c_1,c_2,c_3)$ of chord labels in a pseudochord arrangement assigns $\bot$ if $c_1$ and $c_2$ do not intersect, $1$ if their intersection is above $c_3$, $-1$ if it is below, and $0$ if the three chords meet at a single point.

    We say that two embedded pseudochord arrangements are isomorphic (or, more bluntly, ``the same'') if they have the same chirotope. We call pseudochord arrangement the equivalence class of an embedded pseudochord arrangement under this relationship.
    
    We call any embedded pseudochord arrangement in this equivalence class an embedding of the pseudochord arrangement.
\end{definition}

We further group pseudochord arrangements by the matchings they define.
\begin{definition}
    Given some pseudochord arrangement $A$, we let $M_A$ denote the matching it defines on its labelled endpoints.
    
    Given some perfect matching $M$ on $\{0,1,\ldots 2k-1\}$, We let $arr(M)$ denote the set of simple\footnote{Throughout the paper, we will only be concerned with counting simple arrangements, and will thus omit this qualifier in most places.} pseudochord arrangements $A$ with $M_A = M$.  We call embedding of $M$ any embedding of a pseudochord arrangement in $arr(M)$.
\end{definition}

Figure \ref{fig:diagrams} illustrates these definitions.
Note that for two matchings of order $k$ which differ only by a relabeling of the endpoints which shifts them cyclically (i.e. a mapping of the labels of the form $s_t:a\mapsto a+t\ [mod\ 2k]$) there is a natural one-to-one correspondence between their pseudochord arrangements. In particular, when we are only interested in counting the number of pseudochord arrangements of a matching, we may instead choose any arbitrary such shift of the matching, or not even specify which of these we consider.

\paragraph*{A family of relevant matchings.}
Let us define and give notations for some matchings which are particularly relevant to the present paper.

\begin{definition}
    Let $k_1, k_2, \ldots k_r$ be strictly positive integers and let $k$ be their sum. We will use the notation $(k_1, k_2, \ldots, k_r)$-matching to denote the matching corresponding to $r$ groups of chords of size $k_1, \ldots, k_r$ respectively, such that two chords cross if and only if they belong to different groups (see Figure \ref{fig:3-2-4-diagram} for an illustration).
    If $k_1=k_2=\ldots=k_r = k/r$, we will further shorten the notation to $(k/r)_r$-matching.
\end{definition}

\begin{figure}
    \centering
    \includegraphics[width=0.2\linewidth]{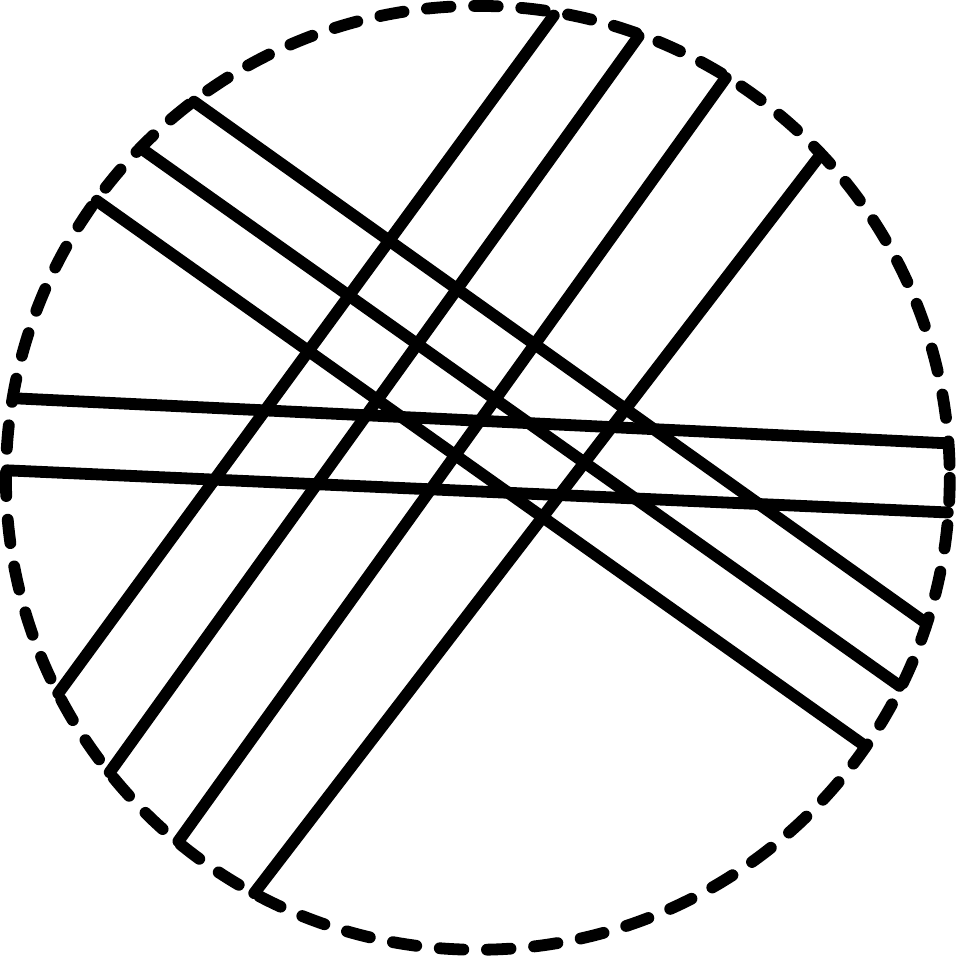}
    \caption{Embedding of a $(3,2,4)$-matching.}
    \label{fig:3-2-4-diagram}
\end{figure}

In particular, an embedding of a $(1)_n$-matching has $n$ chords, all pairwise intersecting. Moreover, its pseudochord arrangements are in one-to-one correspondence with the pseudoline arrangements of order $n$.

\paragraph*{Some basic facts about pseudochord arrangements}

The following definition and propositions will form the basis of our construction.

\begin{definition}
    Consider an embedding $E$ of a matching, with bounding curve $C$. Let $C'$ be a simple closed curve contained in the union of $C$ and the bounded side of $C$. If $C$ does not intersect any point where two chords of $E$ cross, then it naturally defines an embedding of a matching, up to cyclical shift of the endpoint labels, which we call a subembedding of $E$. We say that two such subembeddings are independent if the bounded sides of their respective bounding curves are disjoint.
\end{definition}

\begin{proposition}\label{prop:sub_diagram}
Let $E$ be an embedding of a matching, and let $\{E_1,E_2,\ldots,E_r\}$ be a set of pairwise independent subembeddings of $E$. Then,
\[|arr(M_E)| \geq \prod_{i=1}^r |arr(M_{E_i})|.\]
\end{proposition}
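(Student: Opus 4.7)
The plan is to construct an injective map
\[\prod_{i=1}^r arr(M_{E_i})\ \longrightarrow\ arr(M_E),\]
by substituting, inside the bounding curve $C_i$ of each subembedding $E_i$, a chosen embedding of an arbitrary $A_i \in arr(M_{E_i})$. More concretely, given a tuple $(A_1,\dots,A_r)$, I would pick for each $i$ an embedding of $A_i$ whose bounding curve is $C_i$ itself and whose $2k_i$ labeled endpoints coincide with the points where $C_i$ meets the chord pieces of $E$ (such an embedding exists by applying a homeomorphism of the closed disk bounded by $C_i$ to any fixed embedding of $A_i$). Keeping everything outside $\bigcup_i C_i$ unchanged, we glue these local embeddings to the rest of $E$ along the intersection points with the $C_i$. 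Because the subembeddings are pairwise independent, the interiors of the $C_i$ are disjoint and the gluing does not create new interactions between chord pieces sitting in different regions; the resulting object is an embedding $E'$ of the same matching $M_E$, with every pair of chords crossing at most once.

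The next step is to show that two distinct tuples yield arrangements with distinct chirotopes. Suppose $(A_1,\dots,A_r)\neq(A_1',\dots,A_r')$, say $A_1\neq A_1'$, and let $(c_1,c_2,c_3)$ be a triple of $E_1$-chord labels on which their chirotopes differ. These three pieces belong to three chords $c_1^\ast,c_2^\ast,c_3^\ast$ of $E$, and the crossing of $c_1^\ast$ and $c_2^\ast$ (if it exists) lies in the interior of $C_1$ since it lies on the chord pieces inside $C_1$; likewise for the ``above/below $c_3^\ast$'' relation. Therefore the value of the chirotope of $E'$ on $(c_1^\ast,c_2^\ast,c_3^\ast)$ is determined entirely by the local geometric configuration inside $C_1$, together with the canonical orientations of the $c_i^\ast$ inherited from the labels on $C$. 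Since those global orientations do not depend on which $A_i$ we substituted, the global chirotope value on this triple differs between the two substitutions, and the two arrangements in $arr(M_E)$ are distinct.

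The main bookkeeping nuisance I expect is the mismatch of orientation conventions: the chord $c_j$ in the subembedding $E_1$ is oriented according to the local labeling on $C_1$ (up to cyclic shift), whereas the corresponding chord $c_j^\ast$ of $E$ is oriented according to the labeling on $C$. I would dispense with this by observing that, for each chord, the relation between the two orientations is fixed once and for all (it only depends on the geometry of $E$ outside $C_1$), so passing from the local to the global chirotope value on a given triple amounts to multiplying by a fixed sign. In particular, this map is a bijection on the set $\{-1,+1\}$ of possible chirotope values, and distinctness of local chirotopes on a triple transfers to distinctness of global chirotopes on the corresponding triple. Combining the two paragraphs gives the claimed injection and hence $|arr(M_E)|\geq\prod_{i=1}^r|arr(M_{E_i})|$.
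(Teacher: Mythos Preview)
Your argument is essentially the paper's own proof: locally replace each subembedding by an embedding of the chosen arrangement (same bounding curve and endpoints), observe that the crossing pattern is forced by the matching so the glued object is a valid arrangement of $M_E$, and distinguish different tuples via a chirotope triple localized in one $C_i$. The paper is terser and adds one small point you omit---perturbing $E$ at the outset so that the portion outside the $C_i$ is already simple, ensuring the glued arrangement lands in $arr(M_E)$---while you are more explicit about the orientation bookkeeping, which the paper glosses over.
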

\begin{proof}
    By perturbing the chords slightly, we can assume without loss of generality that the initial embedded matching $E$ is simple.
    For every choice of a simple arrangement in each of the subembeddings, we can choose an embedding with the same bounding curve and same chord endpoints as the starting subembedding. Because in every simple arrangement of a matching the same pairs of chords cross, the overall embedding obtained by locally replacing each subembedding with the chosen arrangement defines a valid simple arrangement of $E$. Moreover, if two overall arrangements differ in at least one of their subembeddings then they are non-isomorphic, as their chirotopes differ for at least one chord triplet.
\end{proof}

The same recursive procedure as in Matou\v{s}ek's recursive construction described above gives the following.
\begin{proposition}\label{prop:recursion}
    Let $r > 0$ be a positive integer. For any positive integer $m$, let $D_r^m$ denote a $(m)_r$-matching. If $c$ is a constant such that $\log_2 |arr(D_r^m)| \geq {c\cdot m^2-O(m)}$, then $b_n \geq {\frac{c}{r(r-1)}n^2 -O(n\log n)}$.
\end{proposition}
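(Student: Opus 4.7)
The plan is to establish the recurrence
\[B_{rm} \geq |arr(D_r^m)| \cdot B_m^r \qquad (\star)\]
and then iterate it. To prove $(\star)$, I start with an embedding of $D_r^m$ (contributing the $|arr(D_r^m)|$ factor) and, within each of the $r$ bundles of $m$ mutually non-crossing chords, locally insert a pseudoline arrangement of order $m$ (contributing a factor of $B_m$ per bundle, independently). This turns each bundle's chords into $m$ pairwise-crossing pseudolines while preserving the inter-bundle crossing pattern, yielding a pseudoline arrangement of order $n = rm$. The construction is injective by the chirotope argument used in the proof of Proposition~\ref{prop:sub_diagram}: changing the inter-bundle $D_r^m$ arrangement alters the chirotope on some triple of chords drawn from three distinct bundles, and changing a single bundle's intra-arrangement alters it on some triple within that bundle.

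Taking base-$2$ logarithms, $(\star)$ becomes $b_{rm} \geq c m^2 - O(m) + r\, b_m$. Setting $n_j = n/r^j$ and iterating until $n_k$ drops below a fixed constant (so $k = \Theta(\log n)$) yields
\[b_n \;\geq\; c \sum_{j=0}^{k-1} r^j n_{j+1}^2 \;-\; A \sum_{j=0}^{k-1} r^j n_{j+1} \;+\; r^k b_{n_k}\]
for some constant $A$. Substituting $n_{j+1} = n/r^{j+1}$, the first sum equals $\frac{c n^2}{r(r-1)}(1 - r^{-k}) = \frac{c n^2}{r(r-1)} - O(n)$; the second sum evaluates to $O(nk) = O(n \log n)$; and $r^k b_{n_k} \geq 0$. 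Together these give the claimed bound. When $n$ is not a power of $r$, one reduces to the nearest such value via the monotonicity of $b_n$ (extending a pseudoline arrangement of order $n$ by adding a new line as the topmost shows $B_{n+1} \geq B_n$).

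The main obstacle lies in rigorously establishing $(\star)$: showing that for any embedding of $D_r^m$ and any $r$-tuple of pseudoline arrangements of order $m$, one can construct a single pseudoline arrangement of order $rm$ realizing both combinatorial structures simultaneously. This is a topological argument analogous to Matou\v{s}ek's original construction. The endpoint layout on the bounding curve is modified (the underlying matching transitions from $D_r^m$ to $(1)_{rm}$ as bundles become internally crossing), but the chirotopes on triples of chords from three distinct bundles are inherited from the chosen $D_r^m$ arrangement, and those within a single bundle from the corresponding pseudoline arrangement, which is exactly what the injectivity argument needs.
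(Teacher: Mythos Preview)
Your approach is exactly what the paper intends: it proves this proposition in a single line by invoking ``the same recursive procedure as in Matou\v{s}ek's recursive construction,'' and your recurrence $B_{rm} \geq |arr(D_r^m)| \cdot B_m^r$ together with its unrolling is precisely that procedure spelled out. One small imprecision in your injectivity argument: when two $D_r^m$ arrangements differ, the distinguishing triple need not come from three \emph{distinct} bundles (for $r=2$ there are none); what actually holds is that any non-$\bot$ chirotope value in $D_r^m$ has $c_1,c_2$ from different bundles, the crossing therefore lies inside the bounding region, and so its position relative to $c_3$ is preserved in the extension regardless of which bundle $c_3$ belongs to.
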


\section{Counting pseudochord arrangements of small matchings}\label{section:counting}

In this section we describe a way to compute the number of pseudochord arrangements for relatively small matchings. The general idea follows the straightforward incremental approach of adding chords one after the other and generating all possible resulting arrangements. There are however some tricks which can greatly speed up the computation for some specific matchings. We will keep the descriptions somewhat succinct, as this is not the main focus of the paper.

\paragraph*{General approach}

The general approach we use is straightforward and incremental. Let $c_1, c_2, \ldots, c_k$ denote the chords in a certain order. We incrementally construct embeddings for all pseudochord arrangements on the first $i$ chords (represented as doubly-connected edge lists, or DCELs) for $1\leq i < k$, by taking all constructed embeddings on the first $i-1$ chords and for each of them inserting chord $c_i$ in all combinatorially distinct valid ways. This amounts, for each of them, to generating all possible paths in a certain directed acyclic graph (or DAG for short). Once we have all embeddings for the first $k-1$ chords, we count the number of ways to insert the chord $c_k$ in each of them. This amounts to counting the number of paths in a certain DAG (where previously, we were generating them instead of simply counting them).

Because counting the number of paths in a DAG can be done much faster in general than generating all paths, it makes sense for us to insert the chord with the largest contribution to the total number of pseudochord arrangements last. We next discuss how we can push this heuristic a bit further for some types of matchings.

\paragraph*{Independent chords}

\begin{figure}
    \centering
    \includegraphics[width=0.3\linewidth]{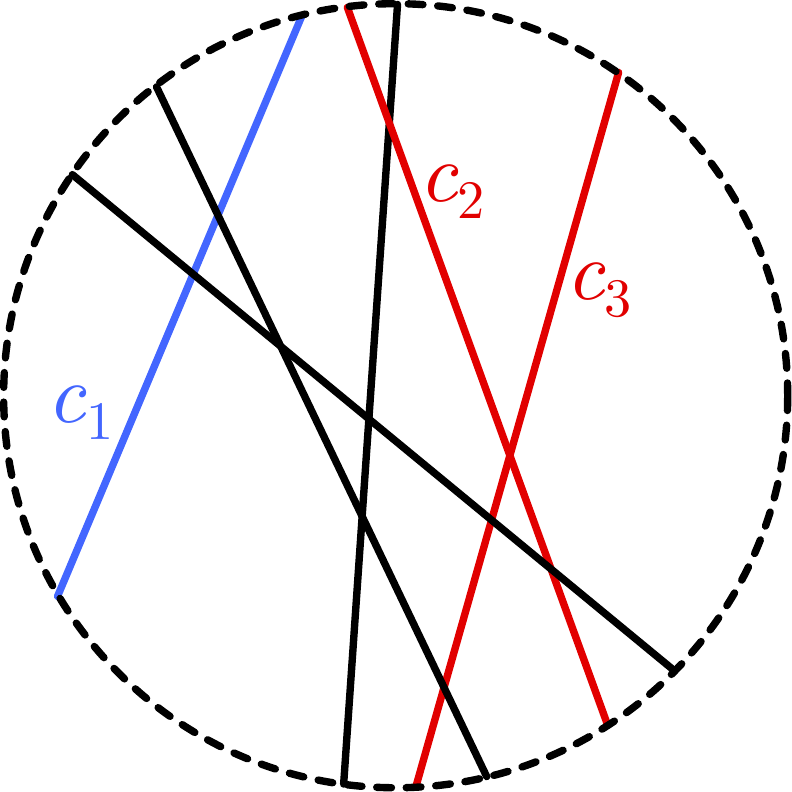}
    \caption{Example of an embedded matching with independent chords. The blue chord $c_1$ is independent from both $c_2$ and $c_3$.}
    \label{fig:independent_chords}
\end{figure}

Consider the embedding of a matching represented in Figure \ref{fig:independent_chords}. A bit of thought should reveal that once we have inserted the black chords, the number of combinatorially distinct ways to insert the blue chord $c_1$ (while still representing the same matching) does not depend on how the red chords $c_2$ and $c_3$ are inserted (or indeed, on their existence at all), and reciprocally the number of ways to insert the red chord $c_2$ and $c_3$ does not depend on how the blue chord $c_1$ is inserted. In this sense, we can think of the red chords as being independent of the blue chord. We formalize this notion as follows.\footnote{Note that the formal definition we give only constitutes a sufficient condition for chords to be independent in the intuitive sense discussed before, but this formal definition is what our algorithm works with.}
\begin{definition}
    Let $M$ be an embedded matching. We say that two chords $c$ and $c'$ of $M$ are independent if at least one of the following conditions hold:
    \begin{itemize}
        \item there exists a chord $k$ of $M$ such that $c$ and $c'$ lie on different sides of $k$;
        \item $c$ and $c'$ do not intersect, and no two other chords $k$ and $k'$ which both intersect $c$ and $c'$ intersect each other;
        \item no other chord intersects both $c$ and $c'$. 
    \end{itemize}
\end{definition}

Our previous discussion suggests the following algorithm: Given an embedded matching $M$, partition its chords into three sets $R$, $G$ and $B$, such that all chords in $R$ are independent from all chords in $B$. Then, for every possible way to insert the chords of $G$, independently and recursively count the number of ways to insert the chords of $R$ and those of $B$, and multiply these counts together. The sum over all arrangements of the chords in $G$ gives the final count.
\begin{figure}
    \centering
    \includegraphics[width=0.6\linewidth]{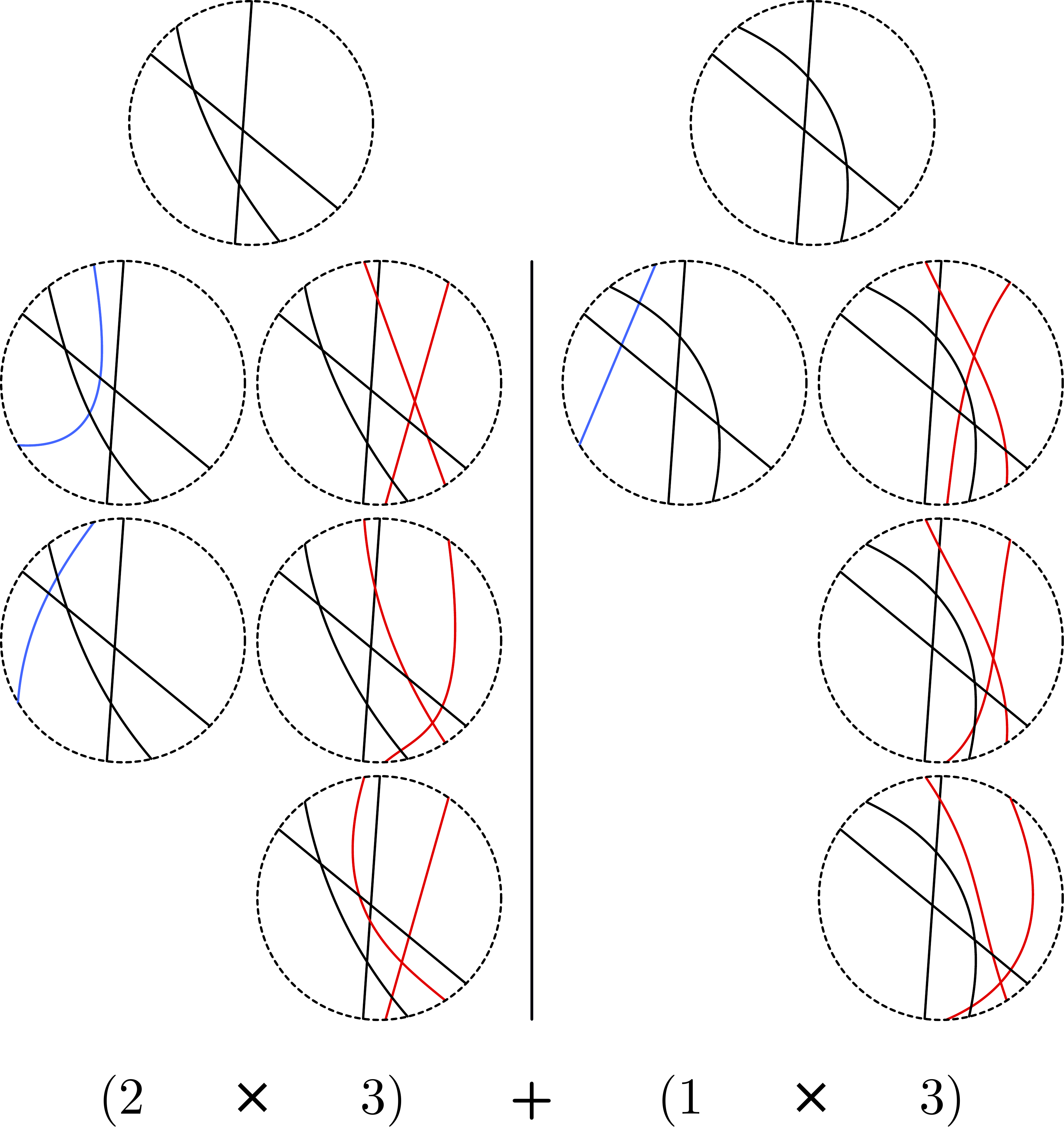}
    \caption{Application of the independent chord approach on the example of Figure \ref{fig:independent_chords}.}
    \label{fig:independent_chords_count}
\end{figure}
This procedure, applied to the example of Figure \ref{fig:independent_chords}, is illustrated in Figure \ref{fig:independent_chords_count}. While the advantage of this method might not be immediately apparent on such a small example, it should be clear that this can offer great performance improvements on larger matchings if there are enough independent chords.

The question remains on how to perform the partition. Unfortunately, we do not know how to obtain the partition which will most speed up the computation, so in our implementation we have relied purely on heuristics for this. In short, our approach is the following:
\begin{itemize}
    \item Compute all pairs of independent chords.
    \item For each chord $c_i$, estimate its contribution to the total number of line arrangements by sampling arrangements of the other chords and counting the number of ways to insert $c_i$ in these sampled arrangements. This defines a weight $w_i$ for the chord.
    \item Generate many valid partitions of the chords into sets $R,G,B$. The weight of a set is the product of the weights of its chords. Choose the partition which maximizes the minimum of the weights of $R$ and $B$ (this captures the intuition that ``heavy'' chords should be inserted last, and that we want to balance the weight as much as possible across the two subproblems).
    \item Recurse on $R$ and $B$.
\end{itemize}
In practice, this simple heuristic gave speed-ups of multiple orders of magnitude on some of the matchings we considered (compared to not exploiting independent chords at all), although more work on this could probably offer further improvements.

\section{Warm-up construction}
\begin{figure}
    \centering
    \includegraphics[width=0.4\linewidth]{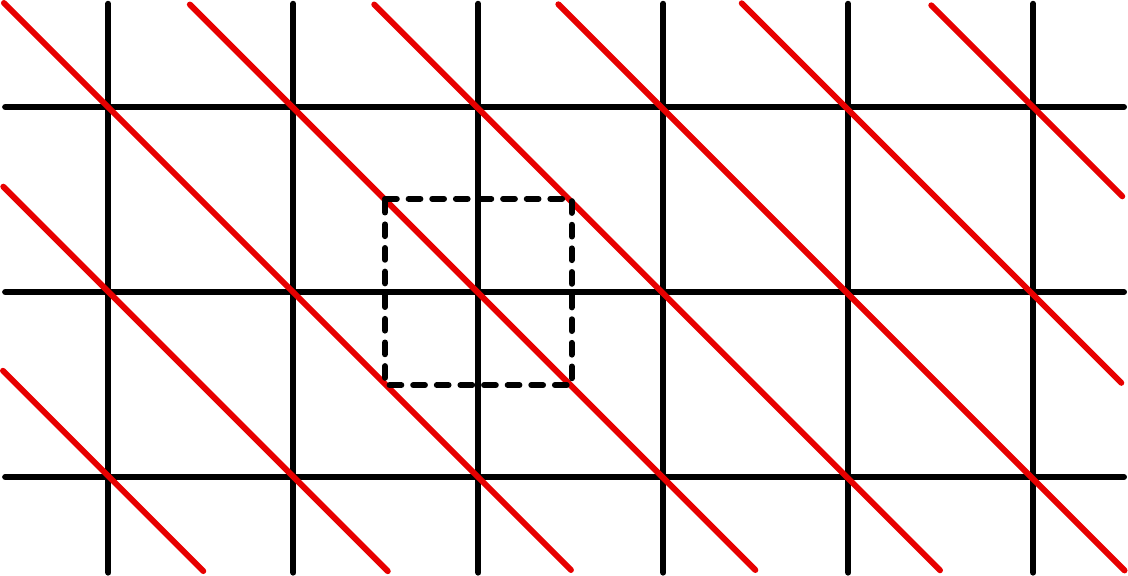}
    \caption{Repeated pattern in the intersection of the three slabs of Matou\v{s}ek's construction. The dotted line represents the boundary of a basic tile of this pattern.}
    \label{fig:matousek-tiles}
\end{figure}
We illustrate our method by slightly improving the lower bound obtained by Matou\v{s}ek's construction. 
Consider the previously mentioned construction (Figure \ref{fig:matousek_bound}) viewed as an embedding of an $(m)_3$-matching. We call the region between the two extremal lines of the same slope a slab. We focus on the region in the intersection of all three slabs, which has area $3m^2/4$. Notice that, ignoring the borders, the lines form a repeating pattern inside that region whose basic tile is an axis-aligned unit square, as illustrated in Figure \ref{fig:matousek-tiles}. These unit squares define a set of identical pairwise independent subembeddings of three pairwise intersecting chords. Each of these subembeddings has $2$ associated pseudochord arrangements. By Proposition \ref{prop:sub_diagram}, this implies that a $(m)_3$-matching has at least $2^{3m^2/4-O(m)}$ pseudochord arrangements. Applying proposition \ref{prop:recursion}, we get a lower bound of $2^{n^2/8-O(n\log n)}$ on the number of pseudoline arrangements of order $n$.

\begin{figure}
    \centering
    \includegraphics[width=0.8\linewidth]{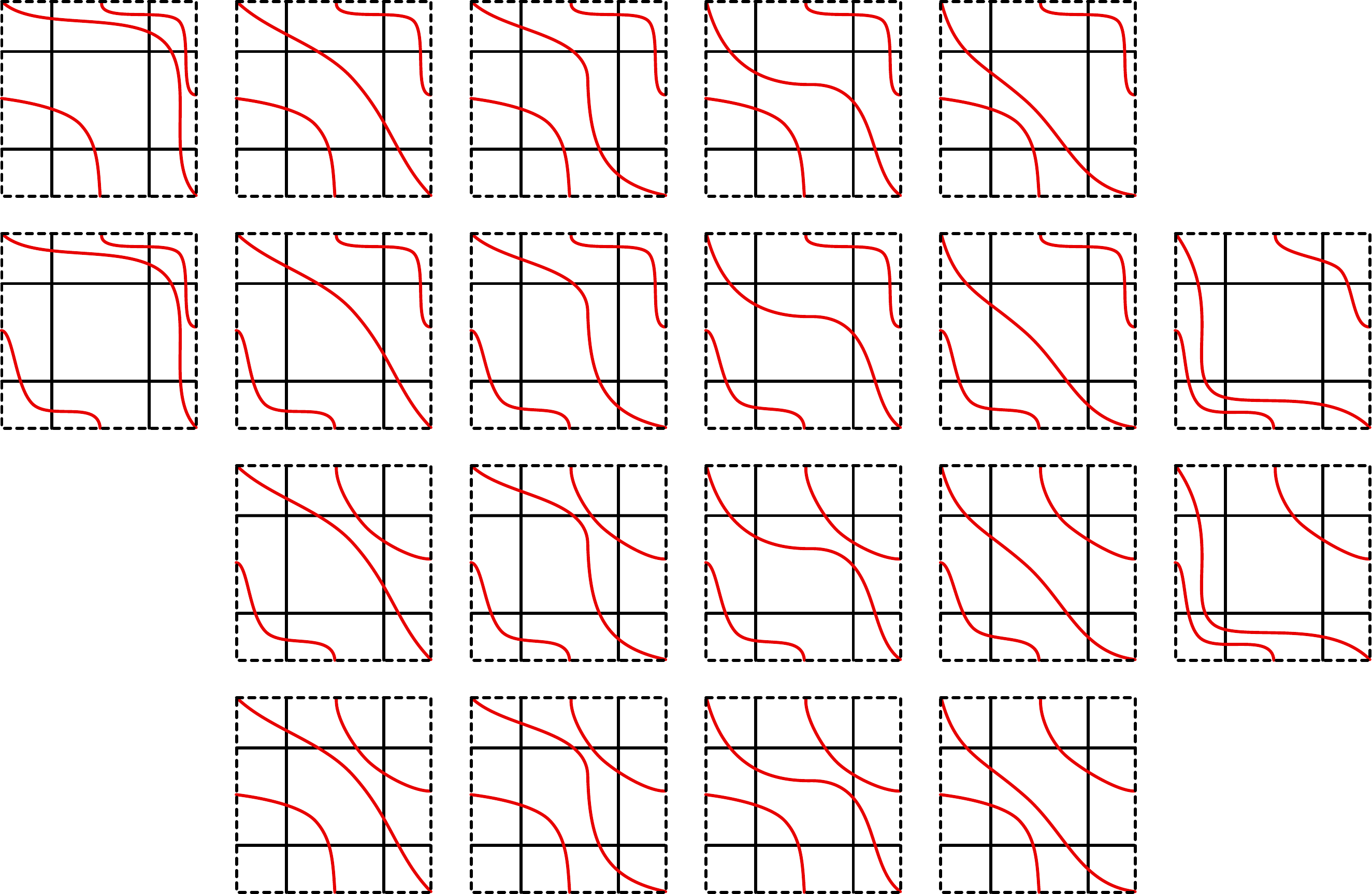}
    \caption{The $20$ arrangements of the subembedding used in our warm-up improvement of Matou\v{s}ek's construction.}
    \label{fig:matousek-diagrams}
\end{figure}
Up to this point, we have only been describing Matou\v{s}ek's construction in different terms. Now comes the improvement: instead of partitioning the area in the intersection of slabs into unit squares, we partition\footnote{We use the word ``partition'' in a loose sense here, as we are ignoring areas near the border of the shape to partition.} it into squares of side-length $2$. The number of such squares inside this area is $3m^2/16-O(m)$ (io
roughly $4$ times fewer than for unit squares, up to border effects). What is the number of pseudochord arrangements for each of the associated matchings? If this number was $2^4 = 16$, we would recover roughly the previous bound again. But it turns out that this number is $20$ (see Figure \ref{fig:matousek-diagrams} which illustrates them all). Thus, by Proposition \ref{prop:sub_diagram}, this implies that a $(m)_3$-matching has at least $20^{3m^2/16-O(m)} = 2^{3\log_2(20) m^2/16-O(m)}$ pseudochord arrangements. Applying proposition \ref{prop:recursion}, we get a lower bound of $2^{\log_2(20) n^2/32-O(n\log n)}$ on the number of pseudoline arrangements of order $n$, where $\log_2(20)/32 > 0.135$ (to be compared with $1/8=0.125$ for Matou\v{s}ek's original construction). We would get further improvements by considering larger and larger subembeddings (although with this particular construction, we cannot hope to improve the constant past the constant of $0.1887...$ obtained by Felsner and Valtr \cite{FelsnerValtr2011}). 

\section{The main construction}\label{sec:construction}

Our main result is based on the same principle as the warm-up. Instead of starting with Matou\v{s}ek's construction, we will start with the ``rectangular construction with 12 slopes'' of Dumitrescu and Mandal. We will use many types of constant sized subembeddings, for which we have explicitly computed the number of pseudochord arrangements (mostly using the method outlined in Section \ref{section:counting}). 

\paragraph*{Rectangular construction with 12 slopes}
% \begin{table}[]
%     \centering
%     \begin{tabular}{c|c}
%        Slope  & $y$-intercepts of the extremal lines\\\hline
%        $\pm 1/3$  & $\pm (m-1)/6$\\
%        $\pm 1/2$  & $\pm (m-1)/4$\\
%        $0, \pm 1, \pm 2, \pm 3$  & $\pm (m-1)/2$\\
%     \end{tabular}
%     \caption{The $y$-intercepts for the extremal lines in each bundle.}
%     \label{tab:y_intercepts}
% \end{table}
We start by recalling one of the constructions of Dumitrescu and Mandal \cite{Dumitrescu2020}. The construction is based on twelve bundles of parallel lines with respectives slopes $0$, $\infty$, $\pm 1/3$, $\pm 1/2$, $\pm 1$, $\pm 2$, $\pm 3$. Each bundle consists of $m$ equally spaced lines, where $m$ is odd, which are placed such that that the middle line passes through the origin. The two extremal lines of the bundle of slope $\infty$ are given by the equations $x=\pm (m-1)/2$. 
The other $10$ extremal lines are given by $y = s\cdot x \pm (m-1)/2$ for $s = 0, \pm 1, \pm 2, \pm 3$ and $y = s\cdot x \pm s(m-1)/2$ for $s =\pm 1/2, \pm 1/3$.

% the $y$-intercepts for the other slopes are listed in Table \ref{tab:y_intercepts}.

From here on, Dumitrescu and Mandal count the number of points where $i$ lines meet, for $3\leq i \leq 12$, and argue that for each of these points there are $B_i$ possible choices for the ``local arrangement'' around the intersection point. Multiplying all of these choices together gives their bound.

\begin{figure}[t]
    \centering
    \includegraphics[width=0.8\linewidth]{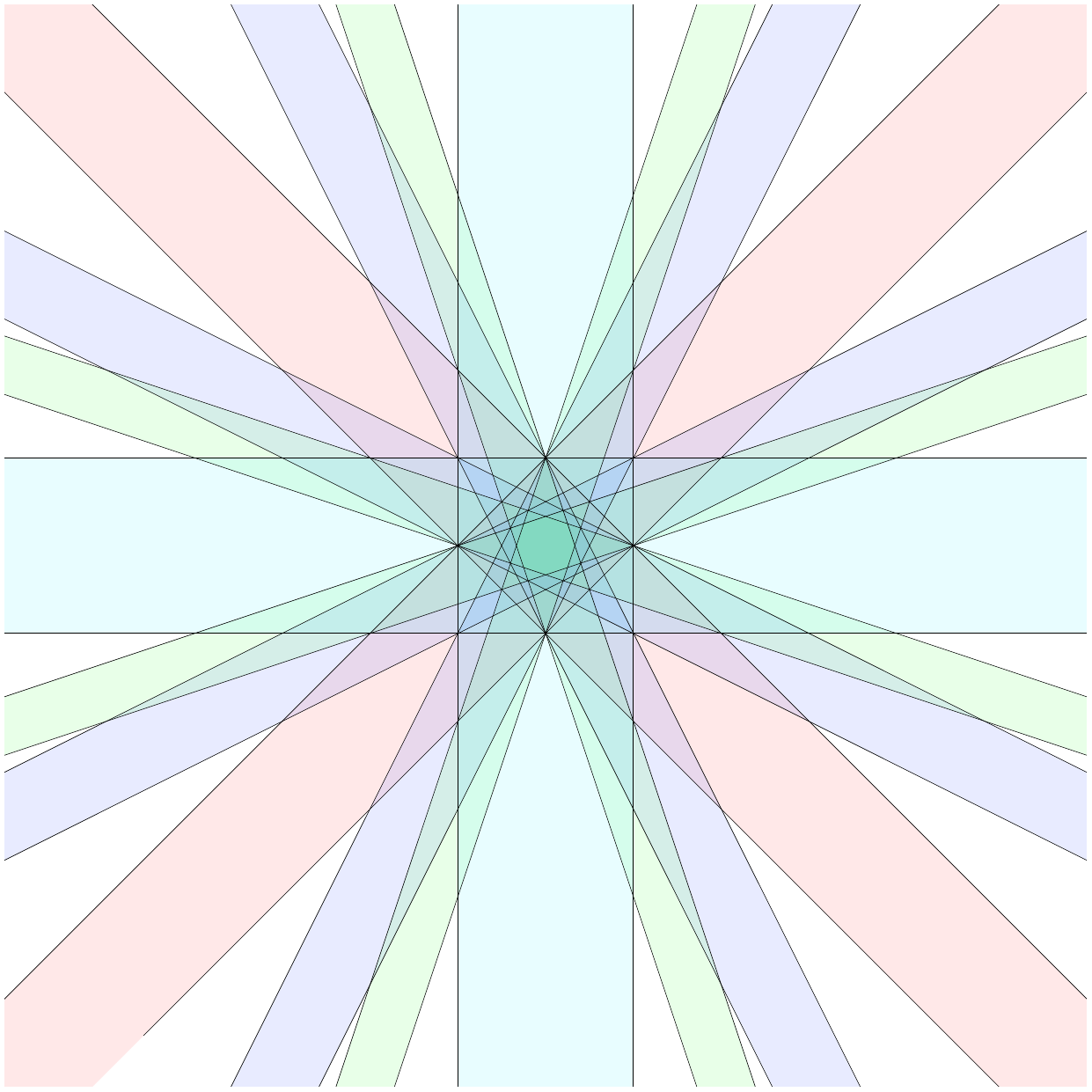}
    \caption{Illustration of the extremal lines and slabs of the construction.}
    \label{fig:rect_12}
\end{figure}

\paragraph*{Our bound}

We view the previous construction of lines as an embedding of a $(12)_m$-matching (by adding a bounding curve) and break it into different types of subembeddings, as we did in the warm-up construction. 

We call the area between the two extremal lines of slope $s$ the slab of slope $s$. Figure \ref{fig:rect_12} illustrates these extremal lines and the corresponding slabs. We group points of the plane depending on which set of slabs they belong to, and we further group them by rotations around the origin and reflexions around the lines of slopes $0$, $\infty$,$-1$,$1$ passing through the origin (for example, points which belong to exactly the slabs with slopes $1$, $2$ and $\infty$ would be grouped together with points belonging to exactly the slabs with slopes $-1$, $-2$ and $\infty$).

This grouping gives rise to $19$ relevant regions, which we will denote by $R_A, R_B, \ldots R_S$. For each of these regions, we will compute a contribution to the overall number of pseudochord arrangements, by choosing a set of subembeddings. Proposition \ref{prop:sub_diagram} will then give us a lower bound on the number of pseudochord arrangements of a $(12)_m$-matching. To compute this contribution we will need to know the area of each of these $19$ regions. We refer the reader to Dumitrescu and Mandal for how these can be computed.\footnote{We note here also that the computation of these areas can be fully automated with the use of standard computer algebra software, thus possibly allowing for computer search of constructions like those of Dumitrescu and Mandal.}

\paragraph*{Region $R_A$}

\begin{figure}[h]
    \centering
    \includegraphics[width=0.35\linewidth]{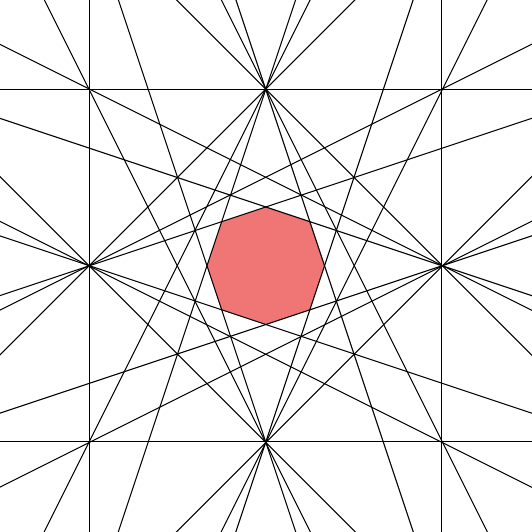}
    \caption{Extremal lines of the construction, with the region $R_A$ higlighted in red.}
    \label{fig:RA}
\end{figure}

\begin{figure}[h]
    \centering
    \includegraphics[width=0.95\linewidth]{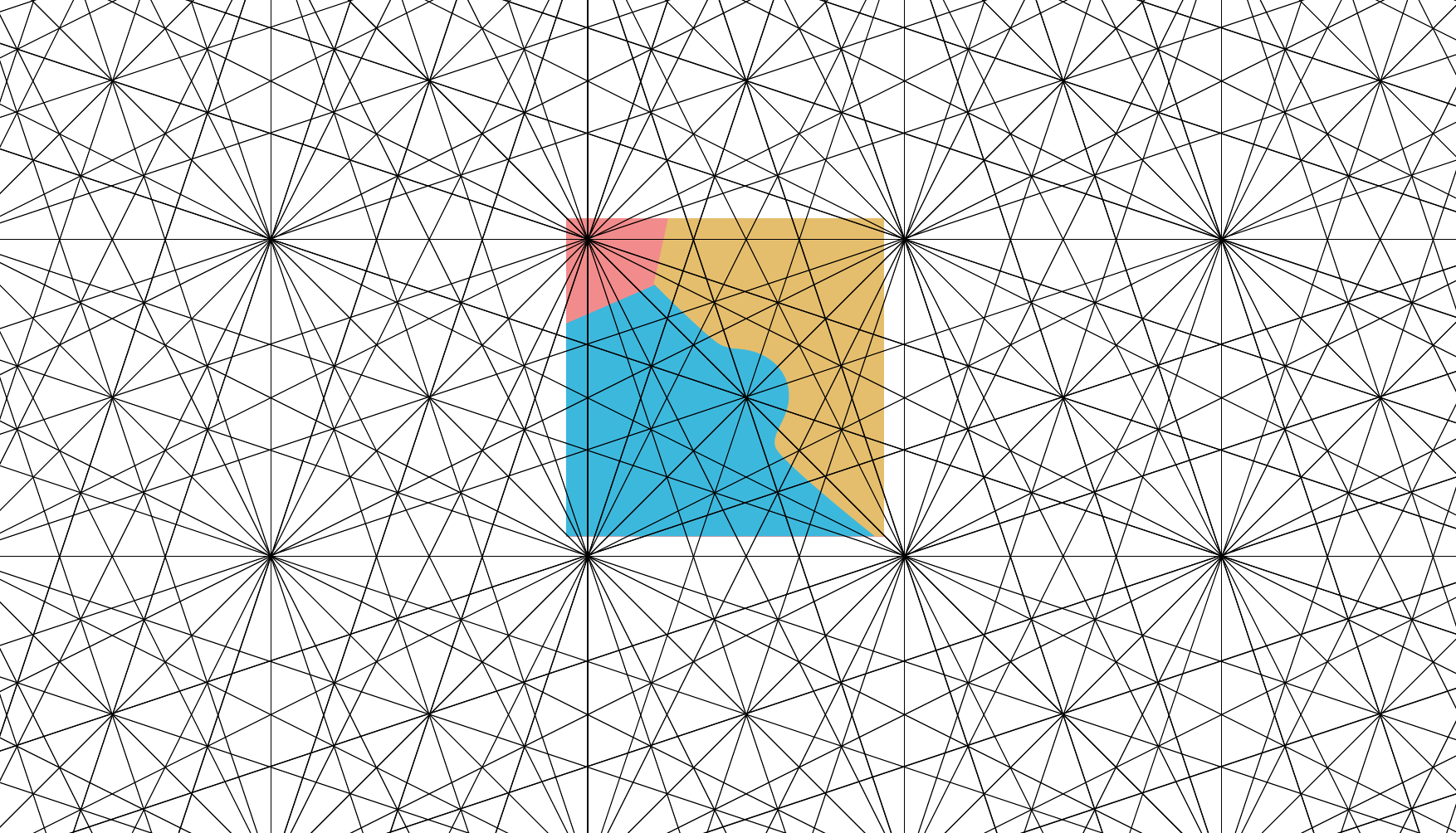}
    \caption{Illustration of the subdiagram considered in region $R_A$, further partitioned into three subdiagrams.}
    \label{fig:RA_zoom}
\end{figure}

The region $R_A$ is illustrated in red in Figure \ref{fig:RA}. If we zoom in on some portion of $R_A$, we get a repeating pattern illustrated in Figure \ref{fig:RA_zoom}. On this latter figure a subembedding $S_A$ is also illustrated by a unit square which is further divided into three subembeddings $S_A^1$ (red, top-left), $S_A^2$ (yellow, top-right) and $S_A^3$ (blue, bottom-left). 

The subembedding $S_A^1$ is an embedding of a $(1)_{12}$-matching, and thus has exactly $B_{12} = 2\ 894\ 710\ 651\ 370\ 536$ pseudochord arrangements. Our computations show that subembedding $S_A^2$ has $1\ 181\ 083\ 068$ pseudochord arrangements. The subembedding $S_A^3$ has $5\ 228\ 739\ 265\ 944$ pseudochord arrangements. By Proposition \ref{prop:sub_diagram}, the number of pseudochord arrangements for $S_A$ is at least
\begin{align*}
    n_A :&= 2\ 894\ 710\ 651\ 370\ 536\cdot 1\ 181\ 083\ 068 \cdot 5228739265944\\
    &= 17\ 876\ 503\ 929\ 228\ 145\ 018\ 796\ 772\ 391\ 568\ 838\ 912.
\end{align*}
Moreover, $S_A$ covers an area of $1$ (recall that there is a unit distance between two consecutive horizontal or vertical lines in the construction), while the region $R_A$ has an area of $m^2/12$. Thus, $S_A$ appears $p_A := m^2/12-O(m)$ times (independently) in $R_A$, taking border effects into account.

\paragraph*{Regions $R_B$, and $R_C$}

\begin{figure}[h]
    \centering\hfill
    \begin{minipage}{0.49\textwidth}
        \centering
        \includegraphics[width=0.67\textwidth]{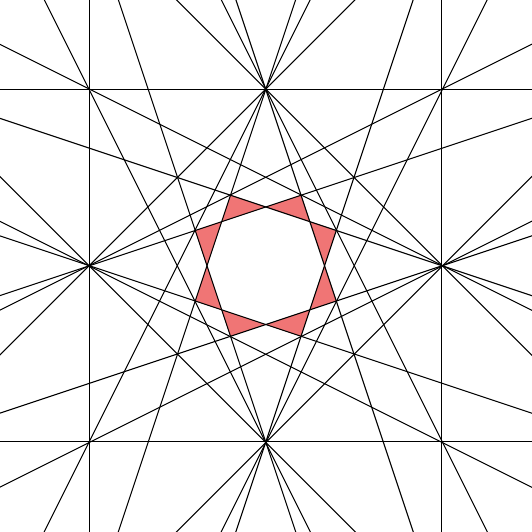}
        \caption{Region $R_B$, which has area $m^2/30$.}
        \label{fig:RB}
    \end{minipage}\hfill
    \begin{minipage}{0.49\textwidth}
        \centering
        \includegraphics[width=0.67\textwidth]{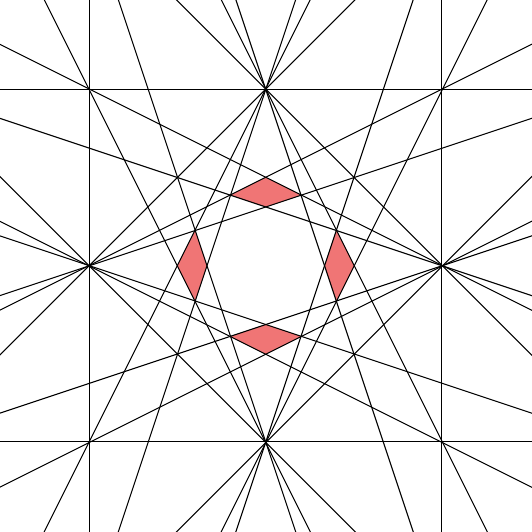}
        \captionsetup{textformat=simple}
        \caption{Region $R_C$, which has area $m^2/30$.}
        \label{fig:RC}
    \end{minipage}\hfill
\end{figure}

\begin{figure}[h]
    \centering\hfill
    \begin{minipage}{0.49\textwidth}
        \centering
        \includegraphics[trim={5cm 4cm 4.6cm 3cm}, clip, width=0.67\linewidth]{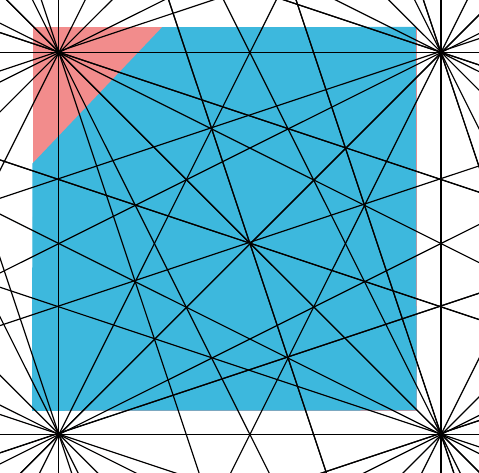}
        \caption{Subembeddings considered in $R_B$, further partitioned into two subembeddings.}
        \label{fig:RB_zoom}
    \end{minipage}\hfill
    \begin{minipage}{0.49\textwidth}
        \centering
        \includegraphics[trim={5cm 4cm 4.6cm 3cm}, clip, width=0.67\linewidth]{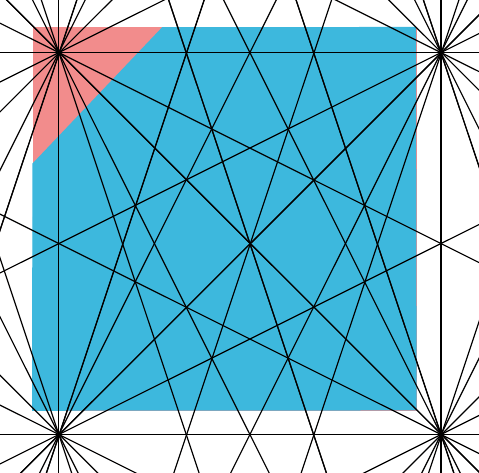}
        \captionsetup{textformat=simple}
        \caption{Subembedding considered in $R_C$, further partitioned into two subembeddings.}
        \label{fig:RC_zoom}
    \end{minipage}\hfill
\end{figure}

The regions $R_B$, and $R_C$ are illustrated in red in Figures \ref{fig:RB}, and \ref{fig:RC} respectively. If we zoom in on some portion of each of these regions, we get repeating patterns illustrated in Figures \ref{fig:RB_zoom} and \ref{fig:RC_zoom} respectively (up to symmetries). As previously, subembeddings $S_B$ and $S_C$ are also illustrated by unit squares further divided into two subembeddings $S_X^1$ (red, top-left) and $S_X^2$ (blue, bottom-right) for $X\in \{B,C\}$. 

The subembedding $S_B^1$ is an embedding of a $(1)_{11}$-matching, and thus has exactly $B_{11} = 5\ 449\ 192\ 389\ 984$ pseudochord arrangements. The subembedding $S_C^1$ is an embedding of a $(1)_{10}$-matching, and thus has exactly $B_{10} = 18410581880$ pseudochord arrangements. 

The subembedding $S_B^2$ and $S_C^2$ have $4\ 485\ 362\ 657\ 994\ 086$ and $6\ 674\ 057\ 692$ pseudochord arrangements respectively. By Proposition \ref{prop:sub_diagram}, the number of pseudochord arrangements for $S_B$ and $S_C$ are at least
\begin{align*}
    n_B &:= 5\ 449\ 192\ 389\ 984\cdot 4\ 485\ 362\ 657\ 994\ 086 = 24\ 441\ 604\ 062\ 259\ 780\ 293\ 677\ 634\ 624;\\
    n_C &:= 18\ 410\ 581\ 880\cdot 6\ 674\ 057\ 692 = 122\ 873\ 285\ 610\ 409\ 820\ 960.
\end{align*}

Moreover, both subembeddings cover an area of $1$, while the regions $R_B$ and $R_C$ both have area $m^2/30$. Thus, $S_B$ appears $p_B := m^2/30-O(m)$ times (independently) in $R_B$, and $S_C$ appears $p_C := m^2/30-O(m)$ times (independently) in $R_C$.

\paragraph*{Regions $R_D$ to $R_Q$}

\begin{figure}[!htb]
    \centering
    \begin{minipage}{0.28\textwidth}
        \centering
        \includegraphics[width=0.9\textwidth]{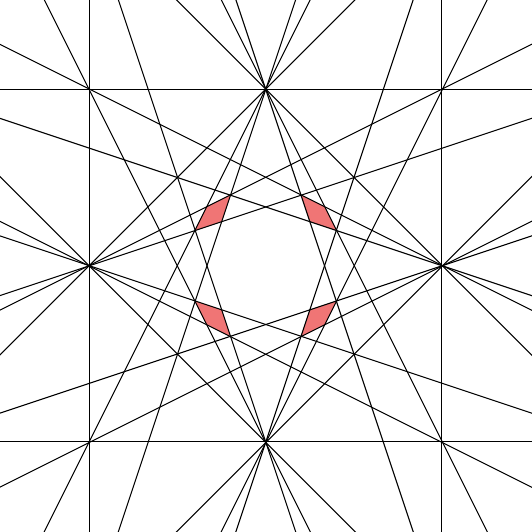}
        \subcaption{$R_D$, with area $m^2/60$.}
        \label{fig:RD}
    \end{minipage}\hfill
    \begin{minipage}{0.28\textwidth}
        \centering
        \includegraphics[width=0.9\textwidth]{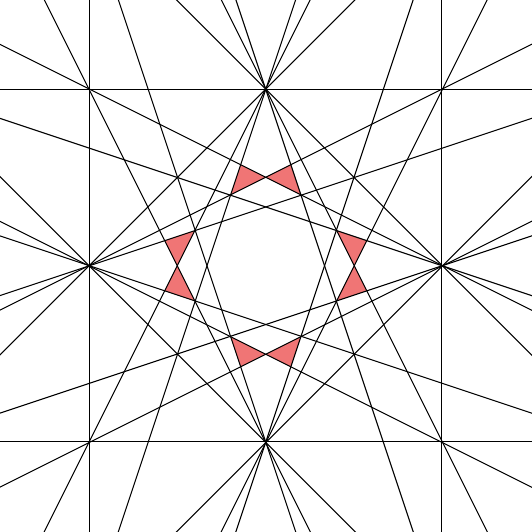}
        \captionsetup{textformat=simple}
        \subcaption{$R_E$, with area $m^2/35$.}
        \label{fig:RE}
    \end{minipage}\hfill
    \begin{minipage}{0.28\textwidth}
        \centering
        \includegraphics[width=0.9\textwidth]{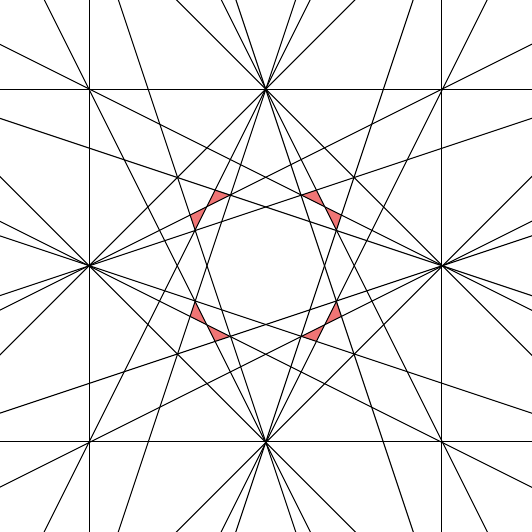}
        \captionsetup{textformat=simple}
        \subcaption{$R_F$, with area $m^2/105$.}
        \label{fig:RF}
    \end{minipage}\\

    \begin{minipage}{0.28\textwidth}
        \centering
        \includegraphics[width=0.9\textwidth]{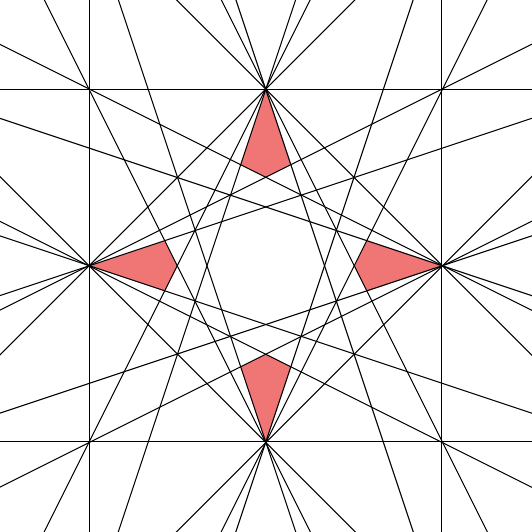}
        \subcaption{$R_G$, with area $m^2/14$.}
        \label{fig:RG}
    \end{minipage}\hfill
    \begin{minipage}{0.28\textwidth}
        \centering
        \includegraphics[width=0.9\textwidth]{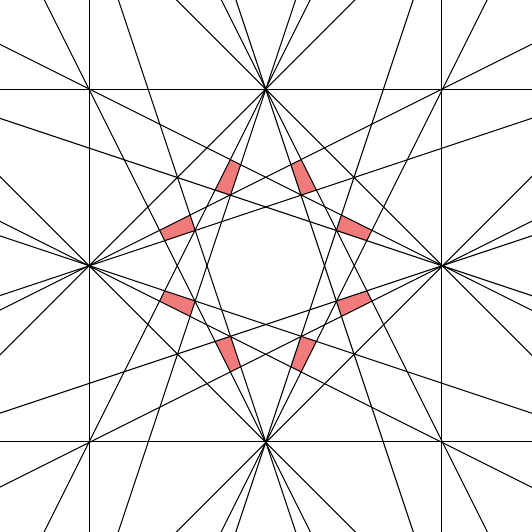}
        \captionsetup{textformat=simple}
        \subcaption{$R_H$, with area $m^2/35$.}
        \label{fig:RH}
    \end{minipage}\hfill
    \begin{minipage}{0.28\textwidth}
        \centering
        \includegraphics[width=0.9\textwidth]{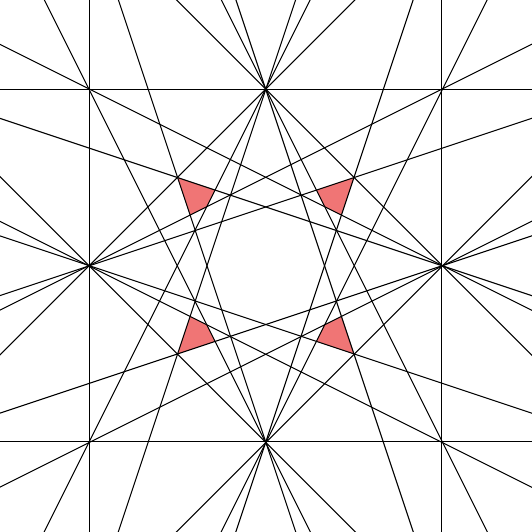}
        \captionsetup{textformat=simple}
        \subcaption{$R_I$, with area $m^2/42$.}
        \label{fig:RI}
    \end{minipage}\\

    \begin{minipage}{0.28\textwidth}
        \centering
        \includegraphics[width=0.9\textwidth]{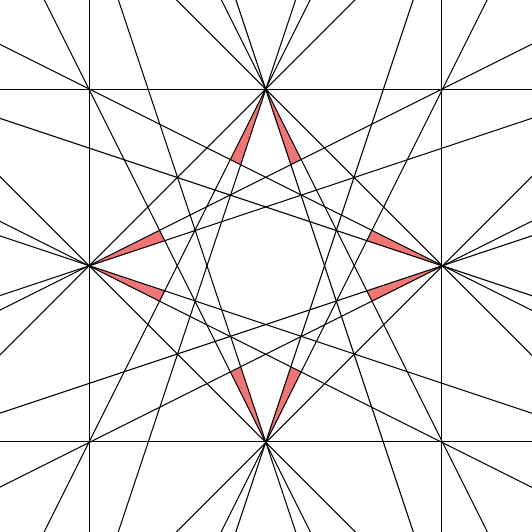}
        \subcaption{$R_J$, with area $m^2/35$.}
        \label{fig:RJ}
    \end{minipage}\hfill
    \begin{minipage}{0.28\textwidth}
        \centering
        \includegraphics[width=0.9\textwidth]{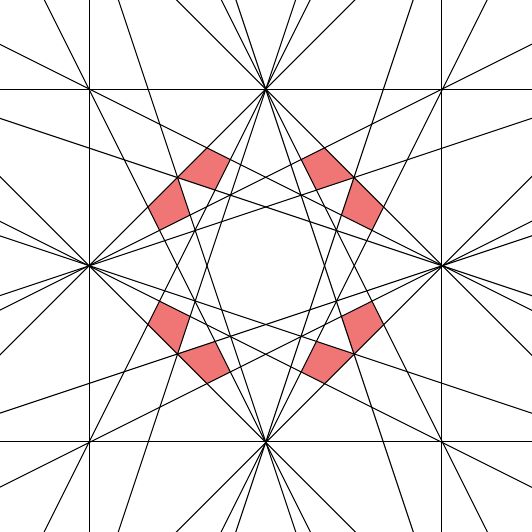}
        \captionsetup{textformat=simple}
        \subcaption{$R_K$, with area $8m^2/105$.}
        \label{fig:RK}
    \end{minipage}\hfill
    \begin{minipage}{0.28\textwidth}
        \centering
        \includegraphics[width=0.9\textwidth]{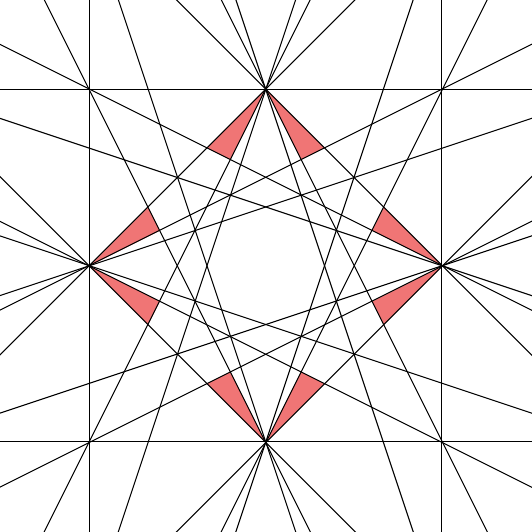}
        \captionsetup{textformat=simple}
        \subcaption{$R_L$, with area $m^2/15$.}
        \label{fig:RL}
    \end{minipage}\\

    \begin{minipage}{0.28\textwidth}
        \centering
        \includegraphics[width=0.9\textwidth]{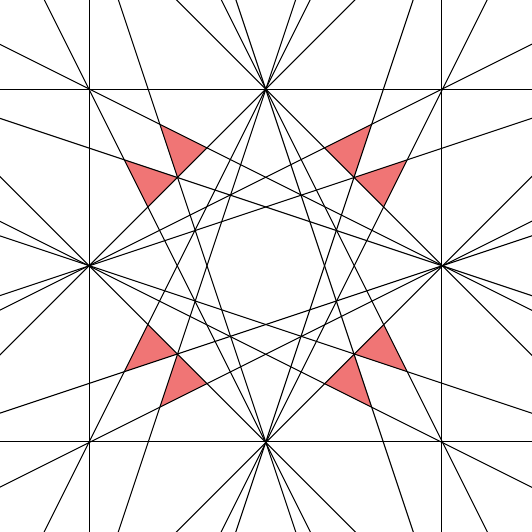}
        \subcaption{$R_M$, with area $m^2/15$.}
        \label{fig:RM}
    \end{minipage}\hfill
    \begin{minipage}{0.28\textwidth}
        \centering
        \includegraphics[width=0.9\textwidth]{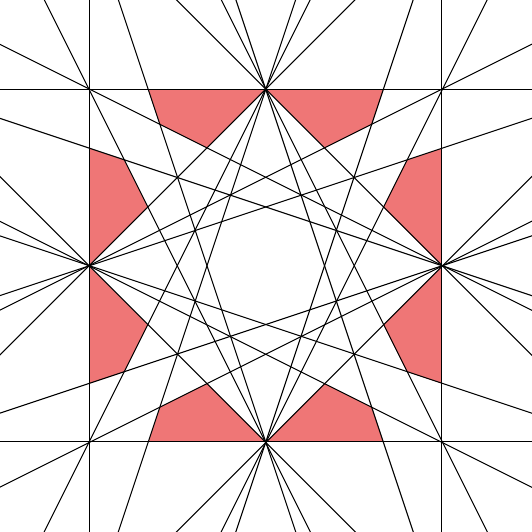}
        \captionsetup{textformat=simple}
        \subcaption{$R_N$, with area $4m^2/15$.}
        \label{fig:RN}
    \end{minipage}\hfill
    \begin{minipage}{0.28\textwidth}
        \centering
        \includegraphics[width=0.9\textwidth]{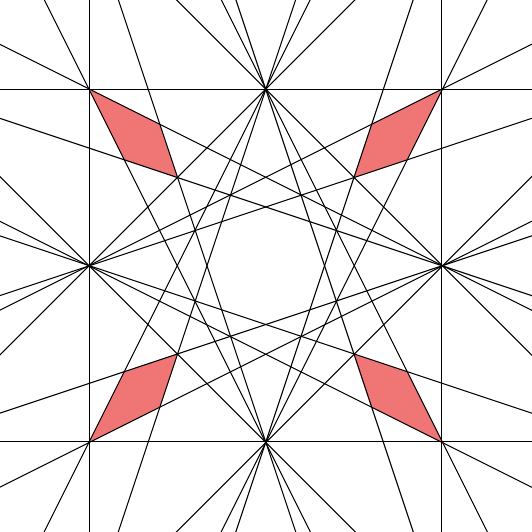}
        \captionsetup{textformat=simple}
        \subcaption{$R_O$, with area $m^2/10$.}
        \label{fig:RO}
    \end{minipage}\\

    \hfill
    \begin{minipage}{0.28\textwidth}
        \centering
        \includegraphics[width=0.9\textwidth]{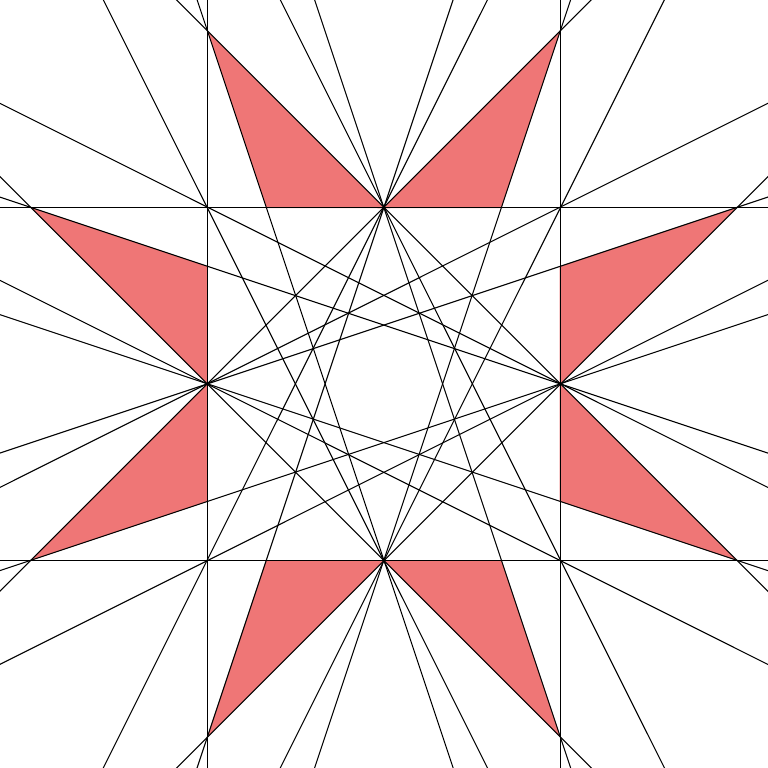}
        \subcaption{$R_P$, with area $2m^2/3$.}
        \label{fig:RP}
    \end{minipage}\hfill
    \begin{minipage}{0.28\textwidth}
        \centering
        \includegraphics[width=0.9\textwidth]{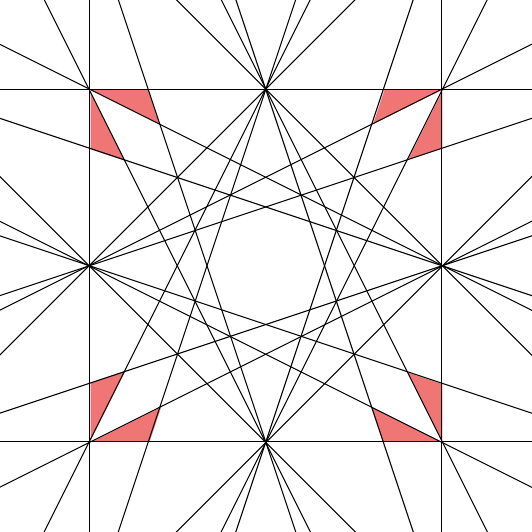}
        \captionsetup{textformat=simple}
        \subcaption{$R_Q$, with area $m^2/15$.}
        \label{fig:RQ}
    \end{minipage}\hfill\hfill

    \caption{Regions $R_D$ through $R_Q$.}\label{fig:RDQ}
\end{figure}

\begin{figure}[!htb]
    \centering
    \hfill
    \begin{minipage}{0.45\textwidth}
        \centering
        \includegraphics[trim={1cm 1cm 1cm 0cm}, clip, angle=90, width=0.9\linewidth]{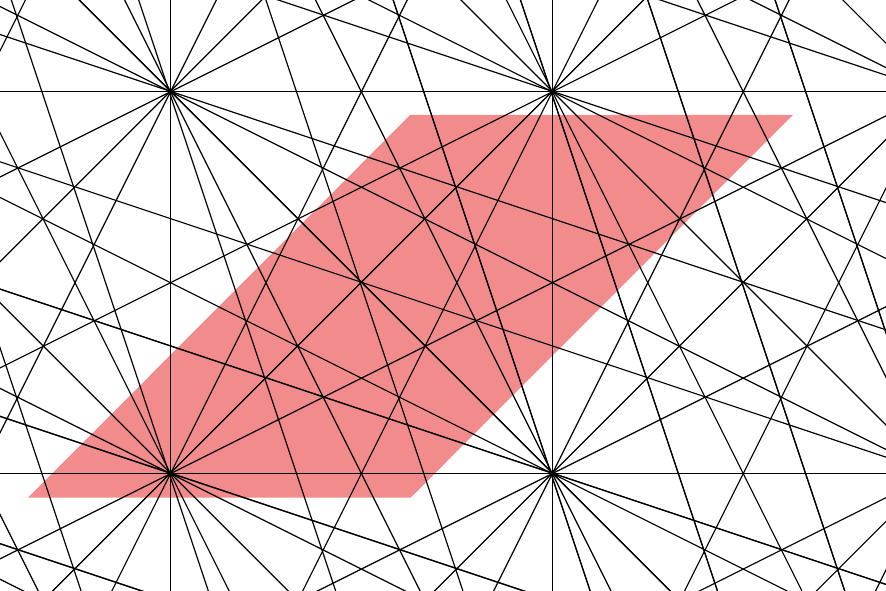}
        \subcaption{Subdiagram used for region $R_D$.}
        \label{fig:RD_zoom}
    \end{minipage} \hfill
    \begin{minipage}{0.45\textwidth}
        \centering
        \includegraphics[trim={4cm 4cm 4cm 3cm}, clip, width=0.7\linewidth]{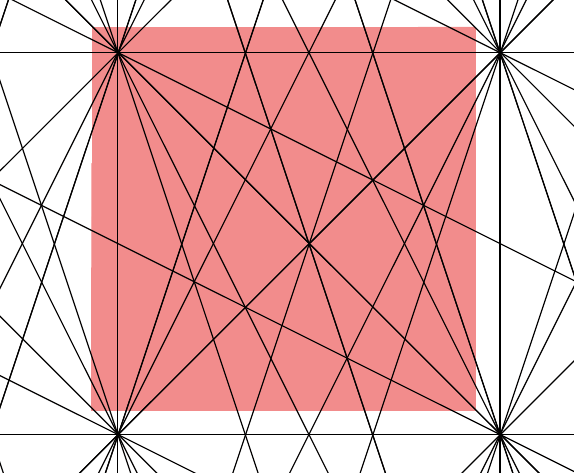}
        \captionsetup{textformat=simple}
        \subcaption{Subdiagram used for region $R_E$.}
        \label{fig:RE_zoom}
    \end{minipage} \hfill\hfill \\
    \hfill
    \begin{minipage}[b]{0.45\textwidth}
        \centering
        \includegraphics[trim={4cm 4cm 4cm 3cm}, clip, width=0.7\linewidth]{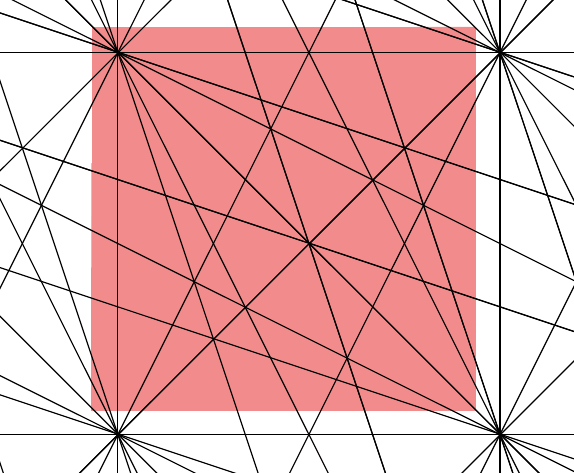}
        \subcaption{Subdiagram used for region $R_F$.}
        \label{fig:RF_zoom}
    \end{minipage}
    \hfill
    \begin{minipage}[b]{0.45\textwidth}
        \centering
        \includegraphics[trim={0cm 1cm 1cm 1cm}, clip, angle=90, width=0.9\linewidth]{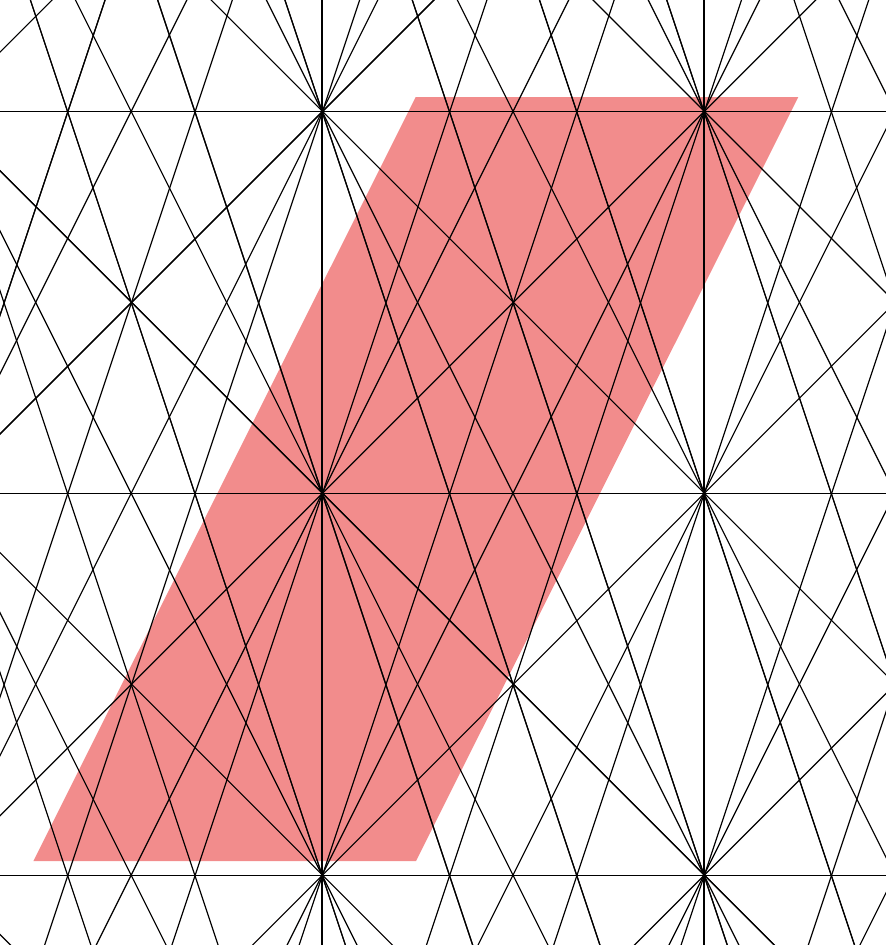}
        \captionsetup{textformat=simple}
        \subcaption{Subdiagram used for region $R_G$.}
        \label{fig:RG_zoom}
    \end{minipage} \hfill

    \caption{Subdiagrams used in regions $R_D$ through $R_G$. The subdiagram used for $R_G$ has area $2$, while the three others have area $1$.}\label{fig:RDG_zoom}
    
\end{figure}

\begin{table}[!htb]
    \centering
    \begin{tabular}{c|c|c}
        Region & $n_{\bullet}$ & $p_\bullet$ \\\hline
        $R_A$ & $17\ 876\ 503\ 929\ 228\ 145\ 018\ 796\ 772\ 391\ 568\ 838\ 912$ & $m^2/12-O(m)$ \\
        $R_B$ &  $24\ 441\ 604\ 062\ 259\ 780\ 293\ 677\ 634\ 624$ & $m^2/30-O(m)$ \\
        $R_C$ &  $122\ 873\ 285\ 610\ 409\ 820\ 960$ & $m^2/30-O(m)$ \\
        $R_D$ &  $145\ 267\ 240\ 140\ 131\ 510\ 094$ & $m^2/60-O(m)$ \\
        $R_E$ &  $884\ 854\ 135\ 426\ 438$ & $m^2/35-O(m)$ \\
        $R_F$ &  $4\ 354\ 539\ 523\ 065\ 118$ & $m^2/105-O(m)$ \\
        $R_G$ &  $134\ 841\ 117\ 561\ 581\ 177\ 808$ & $m^2/28-O(m)$ \\
        $R_H$ &  $21\ 027\ 918\ 182$ & $m^2/35-O(m)$ \\
        $R_I$ &  $1\ 422\ 375\ 838\ 634\ 144\ 387\ 571$ & $m^2/84-O(m)$ \\
        $R_J$ &  $36\ 797\ 080\ 857\ 271\ 908\ 723$ & $m^2/105-O(m)$ \\
        $R_K$ &  $42\ 961\ 411\ 048\ 824$ & $m^2/210-O(m)$ \\
        $R_L$ &  $23\ 454\ 005\ 259\ 745\ 292$ & $m^2/60-O(m)$ \\
        $R_M$ &  $15\ 342\ 798\ 480\ 294\ 823$ & $m^2/60-O(m)$ \\
        $R_N$ &  $50\ 236\ 135\ 250\ 760$ & $2m^2/45-O(m)$ \\
        $R_O$ &  $50\ 236\ 135\ 250\ 760$ & $m^2/15-O(m)$ \\
        $R_P$ &  $104\ 878\ 461\ 268\ 633\ 368\ 974\ 367$ & $2m^2/63-O(m)$ \\
        $R_Q$ &  $104\ 878\ 461\ 268\ 633\ 368\ 974\ 367$ & $m^2/315-O(m)$ \\
        $R_R$ &  $>2^{349\, 033}$ & $m^2/500^2-O(m)$ \\
        $R_S$ &  $>2^{349\, 033}$ & $m^2/(3\cdot 500^2)-O(m)$ \\
    \end{tabular}
    \caption{The number of arrangements for the chosen subembedding for each region ($n_\bullet$) and the number of times said subembedding appears in the corresponding region ($p_\bullet$).}
    \label{tab:counts}
\end{table}

The regions $R_D$ through $R_Q$ are illustrated in Figure \ref{fig:RDQ}.
Here, we can afford to compute the number of arrangements for a single subembedding each (without further breaking it down into further subembeddings), or even consider larger subembeddings, depending on the region. The chosen subembeddings for regions $R_D$ through $R_G$, are illustrated in Figure \ref{fig:RDG_zoom} (the subembeddings chosen for regions $R_H$ to $R_Q$ can be found in the appendix). The number of simple arrangements for each subembedding, and the number of times each subembedding appears in its respective region, are reported in Table \ref{tab:counts}.

\paragraph*{Regions $R_R$ and $R_S$}
\begin{figure}[!h]
    \centering
    \hfill
    \begin{minipage}{0.47\textwidth}
        \centering
        \includegraphics[width=0.8\textwidth]{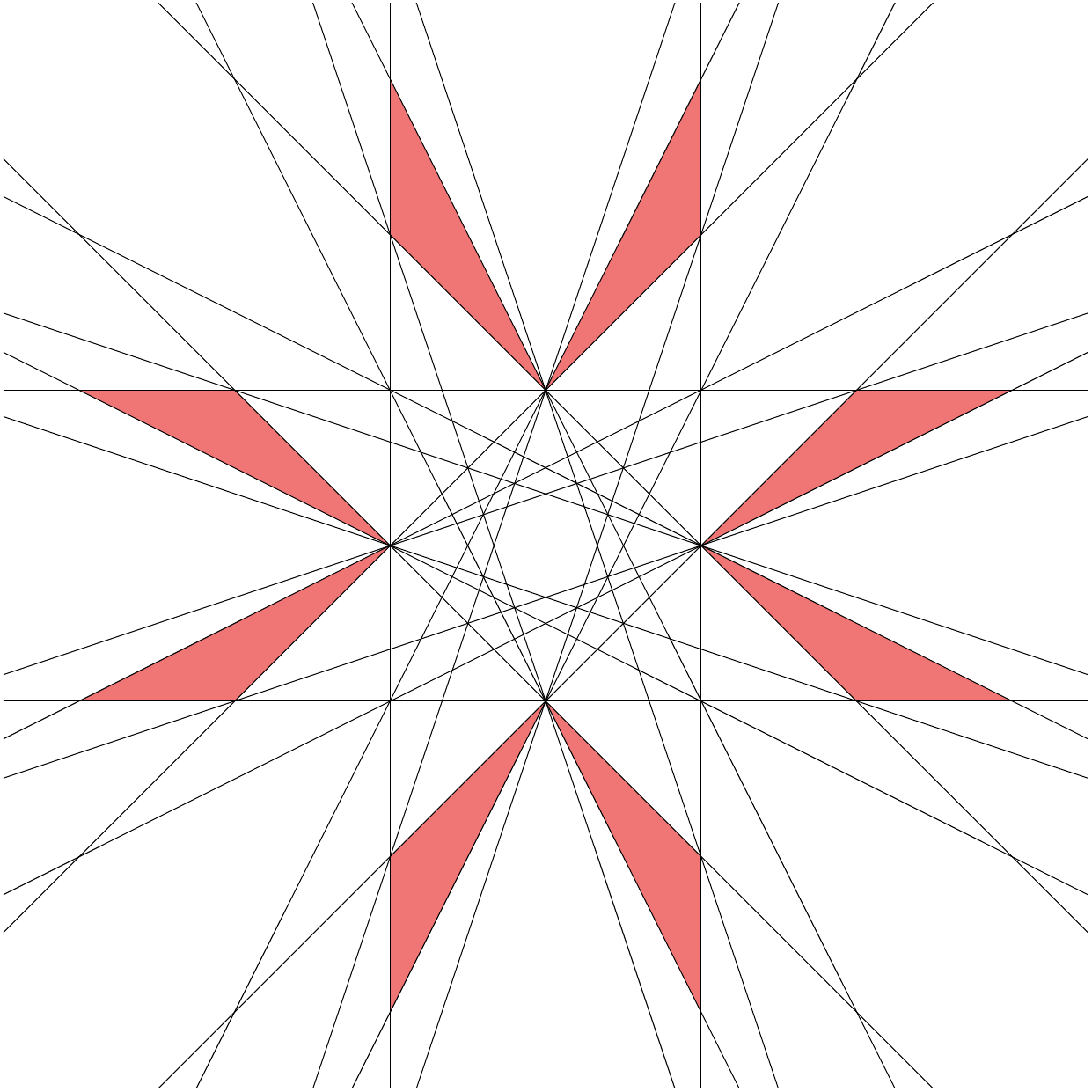}
        \caption{Region $R_R$, which has area $m^2$.}
        \label{fig:RR}
    \end{minipage}\hfill
    \begin{minipage}{0.47\textwidth}
        \centering
        \includegraphics[width=0.8\textwidth]{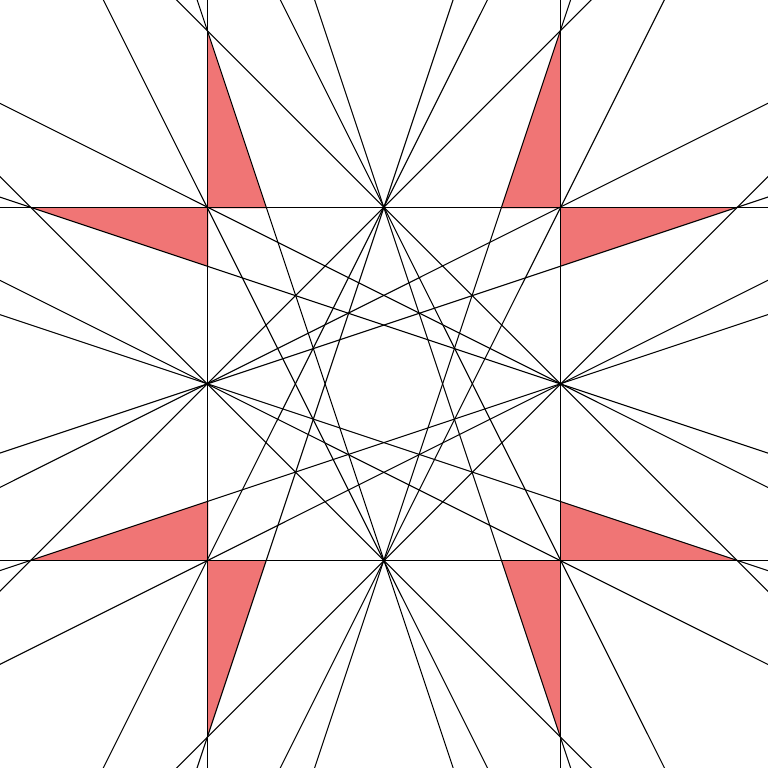}
        \captionsetup{textformat=simple}
        \caption{Region $R_S$, which has area $m^2/3$.}
        \label{fig:RS}
    \end{minipage}\hfill\hfill
\end{figure}

\begin{figure}
    \centering
    \hfill
    \begin{minipage}{0.32\textwidth}
        \centering
        \includegraphics[width=0.9\textwidth]{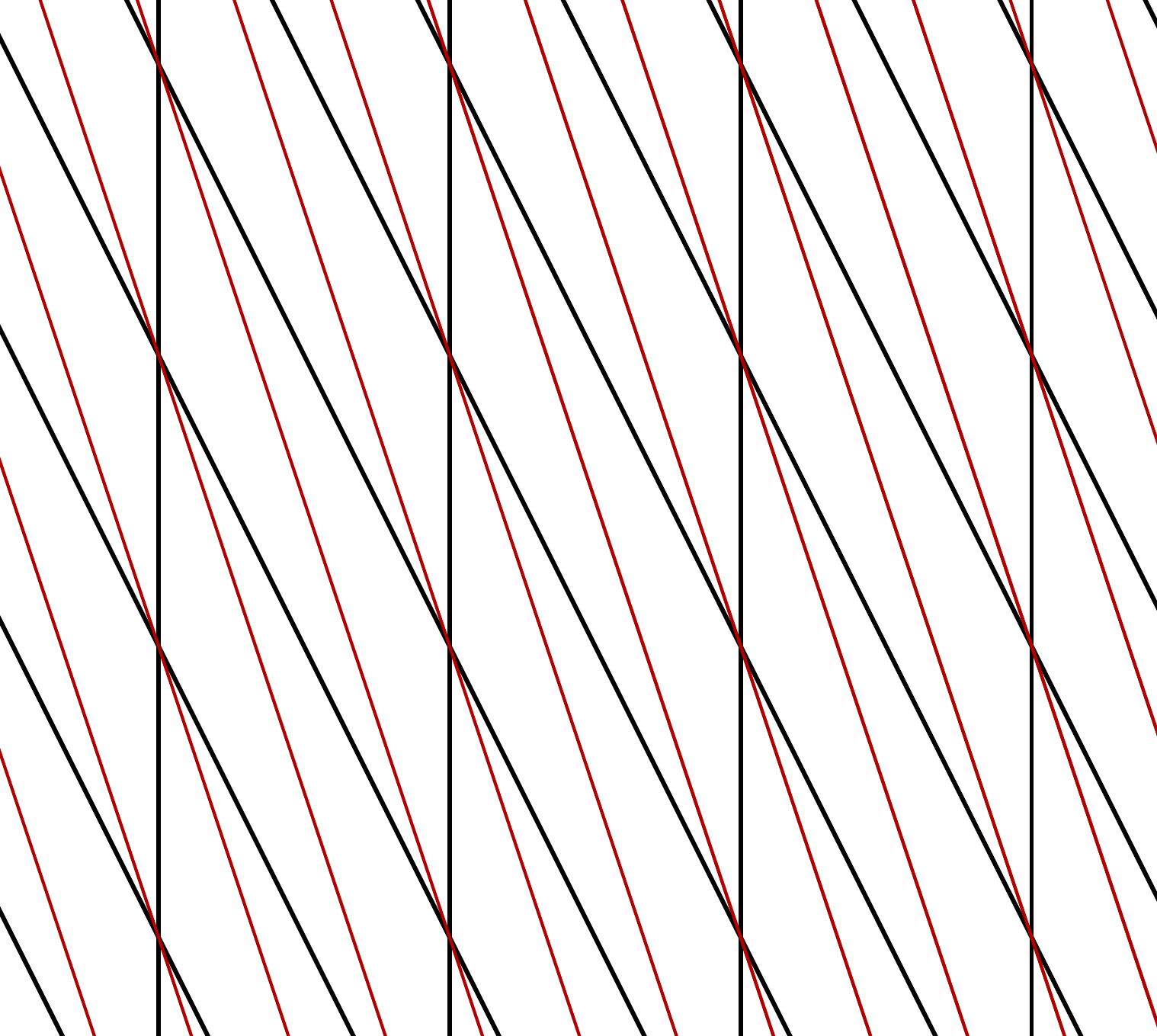}
        \subcaption{Pattern inside region $R_R$.}
        \label{fig:RR_zoom}
    \end{minipage}\hfill
    \begin{minipage}{0.32\textwidth}
        \centering
        \includegraphics[width=0.9\textwidth]{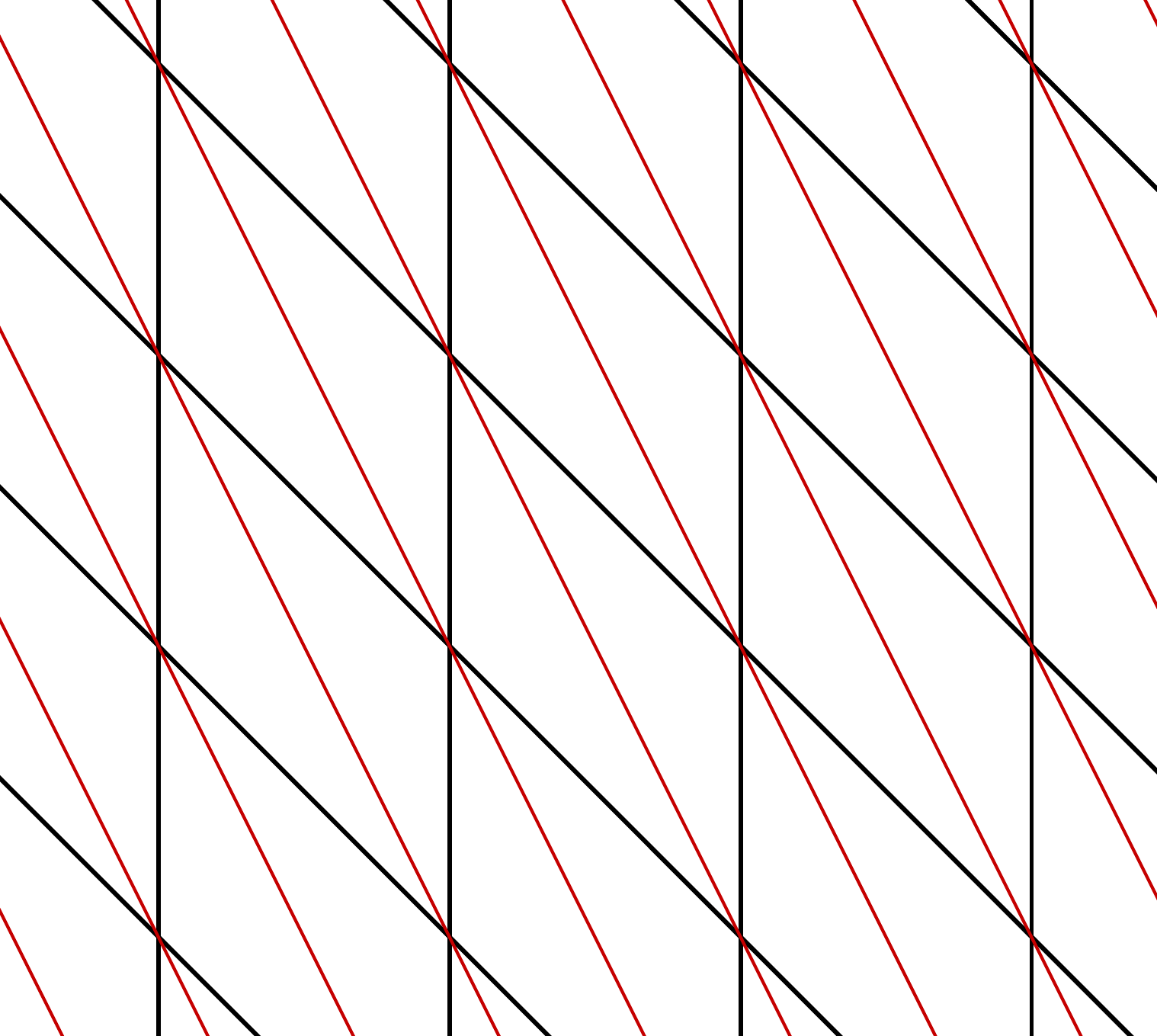}
        \captionsetup{textformat=simple}
        \subcaption{Pattern inside region $R_S$.}
        \label{fig:RS_zoom}
    \end{minipage}\hfill\hfill
    \hfill
    \begin{minipage}{0.32\textwidth}
        \centering
        \includegraphics[width=0.9\textwidth]{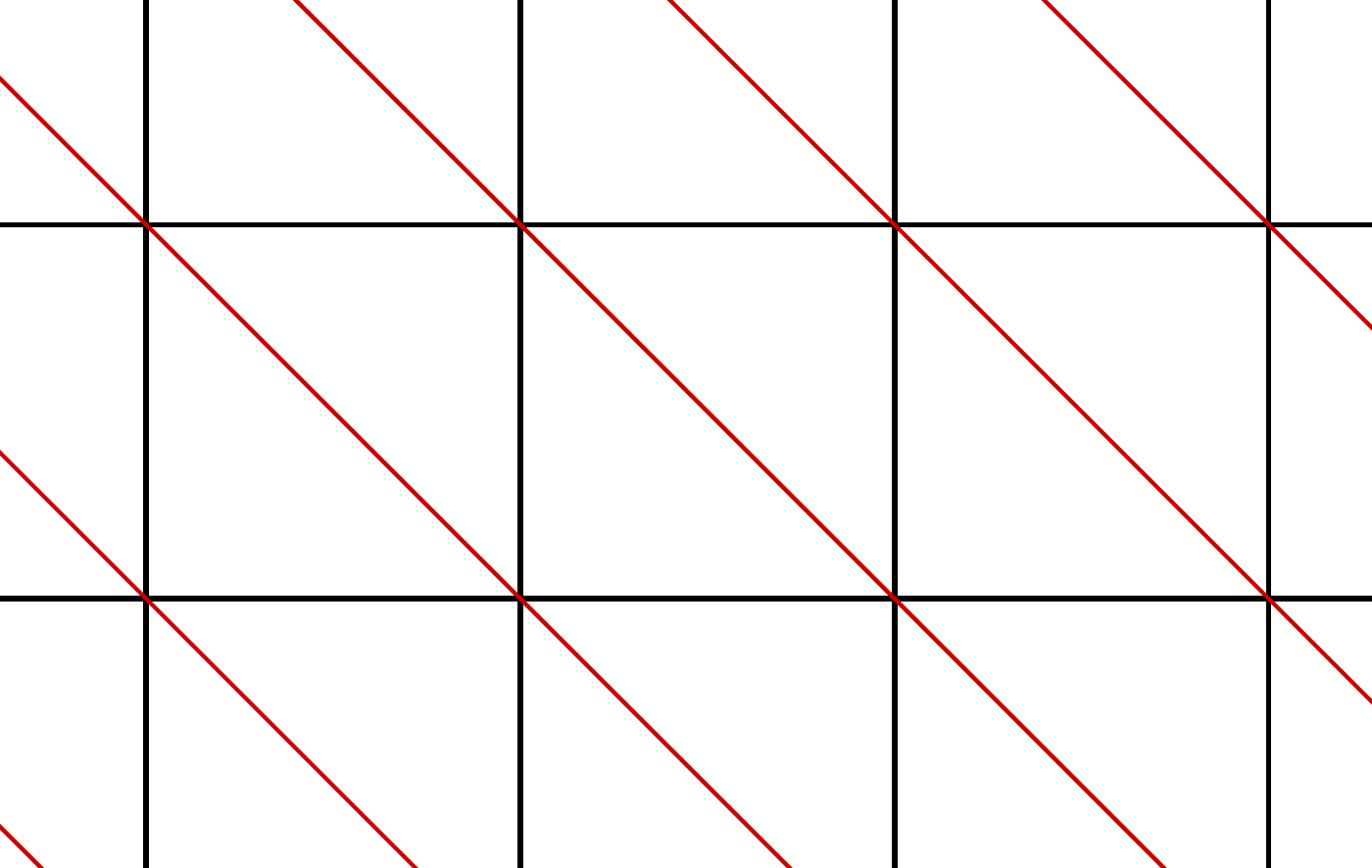}
        \subcaption{Pattern obtained from both after vertical shear mappings.}
        \label{fig:R_shear_zoom}
    \end{minipage}\hfill\hfill

    \caption{Illustration of the fact that up to a vertical shear mapping (which preserves areas), the patterns inside regions $R_R$ and $R_S$ can be viewed as the same regular grid with additional lines of slope $-1$.}
    \label{fig:RRS_shear}
\end{figure}

Up to area preserving linear transformations, the patterns of lines inside regions $R_R$ and $R_S$ are the same, as illustrated in Figure \ref{fig:RRS_shear}. We will thus consider the pattern illustrated on the right of this figure. While for all other regions, we used the approach explained in Section \ref{section:counting} to count the number of arrangements of a chosen subembedding, the specific pattern in these two regions allows for a different, much more efficient approach. 

\begin{figure}
    \centering
    \includegraphics[width=0.75\linewidth]{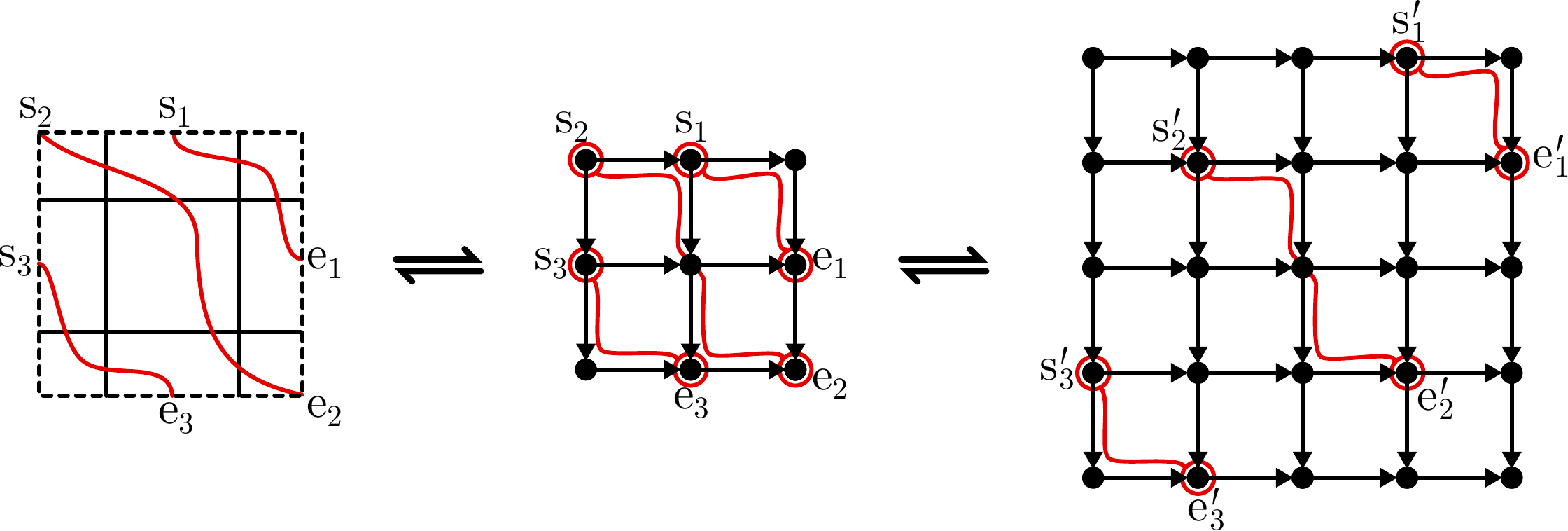}
    \caption{Illustration of the correspondence between arrangements of the chord diagram and disjoint paths in the DAG.}
    \label{fig:gessel-viennot}
\end{figure}

We illustrate it on a small subembedding defined by a square of side length $2$ (see Figure \ref{fig:gessel-viennot}). We can count the number of arrangements of this subembedding by starting with only the black grid, and counting the number of combinatorially distinct ways to insert three curves starting at $s_1$, $s_2$ and $s_3$ respectively, and ending at $e_1$, $e_2$ $e_3$ respectively, such that these three curves do not intersect, and each crosses every black segment at most once. We can in turn translate this into a question about counting the number of ways to have three paths in a certain directed acyclic grid graph, starting and ending at prescribed vertices, such that these paths do not cross (middle illustration in Figure \ref{fig:gessel-viennot}). Note that here by ``not crossing'' we do not mean vertex- or edge-disjoint, but that if a path starts above/right of another, it can at no point go below/left of it. Given three such paths, ordered from top right to bottom left, we can shift the second path one unit down and left along the grid and the second path two units down and left along the grid, we get a set of three vertex-disjoint paths (right illustration in Figure \ref{fig:gessel-viennot}). The reverse is also true: given three vertex-disjoint paths on the directed acyclic graph starting at ${s'_1, s'_2, s'_3}$ and ending at ${e'_1, e'_2, e'_3}$, reversing the shifts produces three non-crossing paths starting at ${s_1, s_2, s_3}$ and ending at ${e_1, e_2, e_3}$. 

Thus, our question finally becomes counting the number of ways to have three such vertex-disjoint paths, whose set of starting vertices is ${s'_1, s'_2, s'_3}$ and set of ending vertices is ${e'_1, e'_2, e'_3}$ (note that the vertex-disjointness ensures that in such a set of paths, the path starting at $s'_i$ will end at $e'_i$ for all $1\leq i \leq 3$). The Lindström–Gessel–Viennot lemma (or a weakened version of it) gives us an efficient method to do so:
\begin{lemma}[Lindström \cite{Linstrom1973}, Gessel \& Viennot \cite{Gessel1985}]
    Let $G$ be a finite directed acyclic graph. Consider starting vertices $S=\{s_{1},\ldots ,s_{k}\}$ and destination vertices $E=\{e_{1},\ldots ,e_{k}\}$. For any two vertices $u$ and $v$, let $p(u,v)$ be the number of paths from $u$ to $v$. Assume that for any tuple of $k$ vertex-disjoint paths starting in $S$ and ending in $E$, the path starting at $s_i$ necessarily ends at $e_i$, for all $1\leq i \leq k$. Then the number of distinct such tuples is the determinant of the following matrix:
    \[M =\begin{pmatrix}
            p(s_1,e_1) & p(s_1, e_2) & \ldots & p(s_1, e_k)\\
            p(s_2,e_1) & p(s_2, e_2) & \ldots & p(s_2, e_k)\\
            \vdots & \vdots & \vdots & \vdots\\
            p(s_k,e_1) & p(s_k, e_2) & \ldots & p(s_k, e_k)\\
        \end{pmatrix}\]
\end{lemma}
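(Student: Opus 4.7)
The plan is to give the classical sign-reversing-involution proof due to Lindström and to Gessel--Viennot. First I would expand the determinant via the Leibniz formula, writing
\[
\det(M) \;=\; \sum_{\sigma \in S_k} \mathrm{sgn}(\sigma) \prod_{i=1}^k p(s_i, e_{\sigma(i)}).
\]
Since $p(s_i, e_{\sigma(i)})$ is by definition the number of paths from $s_i$ to $e_{\sigma(i)}$ in $G$, the product $\prod_{i=1}^k p(s_i, e_{\sigma(i)})$ counts ordered tuples of paths $(P_1, \ldots, P_k)$ where $P_i$ runs from $s_i$ to $e_{\sigma(i)}$, with no disjointness requirement imposed. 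Thus the determinant equals the signed sum $\sum \mathrm{sgn}(\sigma)$ taken over all pairs consisting of a permutation $\sigma$ and a tuple of paths of the corresponding type.

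Next I would split this signed sum according to whether the tuple $(P_1, \ldots, P_k)$ is vertex-disjoint or not. By the hypothesis of the lemma, every vertex-disjoint tuple forces $\sigma = \mathrm{id}$ and hence contributes $+1$, so the vertex-disjoint tuples together contribute exactly the count we want. It remains to show that the non-disjoint contributions cancel. For this I would construct a sign-reversing involution on the non-disjoint tuples: given such a tuple, let $i$ be the smallest index for which $P_i$ meets some other path, let $v$ be the first vertex of $P_i$ that lies on some other path, and let $j$ be the smallest index strictly greater than $i$ with $v \in P_j$. Swapping the portions of $P_i$ and $P_j$ after $v$ produces a new tuple of the same type, but now with associated permutation $\sigma \circ (i\ j)$, which has opposite sign.

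The main thing to verify, and the only real subtlety, is that this procedure is genuinely an involution: applying the swap to the new tuple must recover the original. This holds because the distinguished triple $(i, j, v)$ is determined by the multiset of vertices visited by the two offending paths together with the indexing $i < j$, and the swap leaves both invariant. Acyclicity of $G$ is used only to guarantee that each $p(u,v)$ is finite so that the sums are well defined. With the involution in hand, the signed contributions of non-disjoint tuples pair up and cancel, leaving $\det(M)$ equal to the number of vertex-disjoint path tuples, as claimed.
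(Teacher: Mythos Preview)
The paper does not supply its own proof of this lemma; it is quoted as a classical result with citations to Lindstr\"om and to Gessel--Viennot, and then applied as a black box to count non-crossing path families in the grid DAG. Your argument is the standard sign-reversing involution proof, and it is correct as written (in particular, your choice of $i$ minimal, then $v$ first along $P_i$, then $j>i$ minimal does yield an involution, since the minimality of $i$ forces $P_1,\ldots,P_{i-1}$ to be disjoint from every other path, so the same triple $(i,v,j)$ is recovered after the tail swap). There is nothing to compare against in the paper itself.
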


In our specific case, the entries of this matrix are the number of paths going only down or right between two specified vertices in a grid, which can be easily expressed as a binomial coefficient. We have
\[M =   \begin{pmatrix}
            \binom{2}{1} & \binom{3}{0} &  0\\
            \binom{3}{3} & \binom{4}{2} & \binom{3}{0}\\
            0 & \binom{3}{3} & \binom{2}{1}\\
        \end{pmatrix}
    =   \begin{pmatrix}
            2 & 1 &  0\\
            1 & 6 & 1\\
            0 & 1 & 2\\
        \end{pmatrix}.
        \]
The determinant of this matrix is $|M|=20$, and we recover the count which was already used in the warm-up and illustrated in Figure \ref{fig:matousek-diagrams}. In general, for a square subembedding of size $s\times s$, the corresponding matrix is $M = (m_{ij})_{1\leq i,j \leq 2s-1}$, with
\[m_{ij} = \binom{2s-|s-i|-|s-j|}{\frac{2s-|s-i|-|s-j|+3|i-j|}{2}}.\]
For our actual bound, we use a square subembedding of side length $s=500$. The number of simple arrangements $n_R$ ($=n_S$) of this subembedding is too big to write explicitly here (in base $10$, it is $105\ 070$ digits long). But it is enough for us to know that $\log_2{n_R} > 349\ 033$, and that the number of times this subembedding appears independently in regions $R_R$ and $R_S$ is, respectively, $p_R := m^2/500^2 - O(m)$ and $p_S := m^2/(3\cdot500^2) - O(m)$.

\paragraph*{Putting everything together}

We can now put everything together by applying Proposition \ref{prop:sub_diagram}. This gives us the following lower bound for the number of simple pseudochord arrangements of a $(12)_m$-matching (remember that we started with an embedding of such a matching):
\[\prod_{X\in\{A,B,\ldots S\}} (n_X)^{p_X} = 2^{\left(\sum_{X} p_X \log_2{n_X}\right)}.\]
Computing this bound yields $2^{c\cdot m^2 - O(m)}$ where $c > 36.65$. Thus we have the following:
\begin{theorem}
For large enough $m$, the number of simple pseudochord arrangements of a $(12)_m$-matching is greater than $2^{c\cdot m^2}$, for some constant $c> 34.374$.
\end{theorem}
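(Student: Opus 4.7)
The plan is to assemble the lower bound as a product of contributions from the nineteen regions $R_A, R_B, \ldots, R_S$, using Proposition \ref{prop:sub_diagram} as the single combinatorial engine. Starting from the $(12)_m$-matching realized by the Dumitrescu–Mandal arrangement, I would first check that the subembeddings $S_X$ selected in the previous paragraphs are pairwise independent across all regions: each copy of $S_X$ in $R_X$ lies inside a bounding curve that is disjoint from every other copy considered (both within $R_X$ and across distinct $R_X$), because the regions are defined by which slabs they belong to and the unit/constant-sized tiles used inside each region do not overlap.

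Once independence is established, Proposition \ref{prop:sub_diagram} immediately gives
\[
|arr(D_{12}^m)| \;\geq\; \prod_{X \in \{A,B,\ldots,S\}} (n_X)^{p_X},
\]
where $n_X$ and $p_X$ are the quantities collected in Table \ref{tab:counts}. Taking logarithms base $2$, the bound becomes
\[
\log_2 |arr(D_{12}^m)| \;\geq\; \sum_{X} p_X \log_2 n_X \;=\; \Bigl(\sum_{X} \tfrac{p_X}{m^2}\log_2 n_X\Bigr)\,m^2 \;-\; O(m),
\]
since each $p_X$ has the form $\alpha_X m^2 - O(m)$ with $\alpha_X$ the area density recorded in Table \ref{tab:counts}. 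The next step is to plug in the numerical values: each $\log_2 n_X$ is a rational multiple (or, for $R_R$ and $R_S$, a numerically verified lower bound $\log_2 n_X > 349\,033$) of a small enough number that the sum $\sum_X \alpha_X \log_2 n_X$ can be evaluated in exact (or sufficiently high precision) arithmetic.

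The bulk of the work is therefore purely numerical, and the main obstacle is just to make sure the arithmetic clears the threshold. The text remarks that the sum evaluates to a constant exceeding $36.65$, so stating the theorem with $c > 34.374$ leaves a comfortable slack that absorbs both the $O(m)$ border losses (which are subsumed in the ``large enough $m$'' qualifier) and any mild rounding in the individual $\log_2 n_X$ estimates. Concretely, I would compute $\sum_X \alpha_X \log_2 n_X$ in exact rational arithmetic for the seventeen small regions using the values in Table \ref{tab:counts}, then add the contribution $\frac{1}{500^2}\log_2 n_R + \frac{1}{3\cdot 500^2}\log_2 n_S \geq \frac{4}{3\cdot 500^2}\cdot 349\,033$ of the two large regions obtained via the Lindström–Gessel–Viennot determinant formula. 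Verifying that the resulting constant exceeds $34.374$ (and in fact $36.65$) completes the proof, the $O(m)$ term being absorbed into the ``large enough $m$'' hypothesis.
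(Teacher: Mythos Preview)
Your proposal is correct and follows essentially the same approach as the paper: apply Proposition~\ref{prop:sub_diagram} to the full collection of pairwise independent subembeddings across all nineteen regions, take logarithms, and verify numerically that $\sum_X \alpha_X \log_2 n_X$ exceeds the stated constant, with the $O(m)$ border terms absorbed by the ``large enough $m$'' hypothesis. The paper's own argument is in fact even terser than yours, simply writing down the product $\prod_X (n_X)^{p_X}$ and asserting the numerical outcome.
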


By now applying Proposition \ref{prop:recursion} we get our main result as a corollary:
\begin{corollary}
For large enough $n$, the number of simple line arrangements of order $n$ is greater than $2^{c \cdot m^2}$, for some constant $c>0.2604$.
\end{corollary}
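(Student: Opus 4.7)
The plan is to deduce the corollary directly from the preceding theorem by a single invocation of Proposition~\ref{prop:recursion}. The theorem asserts that, for large enough $m$, a $(12)_m$-matching $D_{12}^m$ satisfies $\log_2 |arr(D_{12}^m)| \geq c \cdot m^2 - O(m)$ for some constant $c > 34.374$, which is exactly the hypothesis of Proposition~\ref{prop:recursion} with parameter $r = 12$. Applying that proposition yields
\[b_n \;\geq\; \frac{c}{r(r-1)}\, n^2 - O(n\log n) \;=\; \frac{c}{132}\, n^2 - O(n\log n).\]

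Next I would verify the arithmetic that converts the constant. Since $132 \cdot 0.2604 = 34.3728 < 34.374 < c$, we have $c/132 > 0.2604$, so for $n$ sufficiently large the strictly positive gap $\bigl(c/132 - 0.2604\bigr)\, n^2$ absorbs the $O(n\log n)$ error term. This gives $b_n > 0.2604 \cdot n^2$ as claimed.

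Since all the substantive work is already carried out in the theorem (the per-region enumeration in regions $R_A, \ldots, R_S$ summing to $\sum_X p_X \log_2 n_X > 34.374 \cdot m^2 - O(m)$, combined with the independence of the subembeddings via Proposition~\ref{prop:sub_diagram}) and in Proposition~\ref{prop:recursion} (which recursively substitutes a pseudoline arrangement of order $m$ into each of the $r$ bundles of a $(m)_r$-matching, à la Matou\v{s}ek), there is no remaining obstacle. The only sanity check is that the hidden constants in $O(n\log n)$ do not swallow the margin; because the margin $c/132 - 0.2604$ is a fixed positive number, this is automatic for $n$ large enough, and no further technical work is needed.
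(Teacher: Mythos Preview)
Your proposal is correct and is precisely the paper's own argument: the corollary is obtained from the theorem by a single application of Proposition~\ref{prop:recursion} with $r=12$, and you have additionally spelled out the arithmetic check $132\cdot 0.2604 = 34.3728 < 34.374$ that the paper leaves implicit.
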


\section{Implementation and computation time}

We ran all the the code on a 2020 MacBook Pro with an M1 chip. 
The code for computing the number of pseudochord arrangement for a given matching was written in Rust and can be found on GitHub~\cite{code}. It was parallelized to make use of all $8$ cores available on the machine (the algorithm is easy to parallelize, and for large enough chord matchings the speedup is close to linear in the number of cores used). The total CPU time (summed over the $8$ cores) used to compute the number of arrangements for all matchings used in our construction (excluding the $(1)_{12}$, $(1)_{11}$, and $(1)_{10}$ matchings, for which the numbers of pseudochord arrangements were previously known) was approximately $85$ hours.

\section{Discussion}

Many choices in the present construction were made heuristically or due to constraints in computing power and time. There are thus many avenues for possible improvements of the bound we obtain. We list a few here.
\begin{itemize}
    \item There is a lot of freedom in the choice of the initial embedding of the matching. Here we have worked with Dumitrescu and Mandal's ``rectangular construction with 12 slopes'' but it is not unlikely that a different matching or embedding could yield better bounds. The best bound obtained by these authors is based on a different construction (the so-called ``hexagonal construction with 12 slopes''), but our experiments applying our method on this construction have not yielded better bounds than with the rectangular construction. One can also try more different slopes. In particular, one could try to exploit the recent work of Rote \cite{Rote2023}, where he counts the number of pseudoline arrangements of order $16$ (although once again, we were unable to do so fruitfully). Note also that the initial embedding need not have chords which are straight line segments.
    \item The choices of subembeddings we made in each region are largely arbitrary. We have tried many different choices, but have no guarantee that the ones we made were good.
    \item Continuing on the previous point, being able to count the number of pseudochord arrangements faster or for larger subembeddings should lead to improvements. This could be achieved by better algorithms, better implementations, or simply more computing power. In particular, as the algorithm we use is highly parallelizable, running it on a computing cluster with many cores is an obvious avenue for improvements. One could also try to find and use faster ad-hoc algorithms for different types of matchings.
\end{itemize}

\bibstyle{plainurl}
\bibliography{refs}

\newpage
\section*{Appendix : Subembeddings chosen for regions $R_H$ to $R_Q$.}
\begin{figure}[!h]
    \centering
    \includegraphics[scale=0.5]{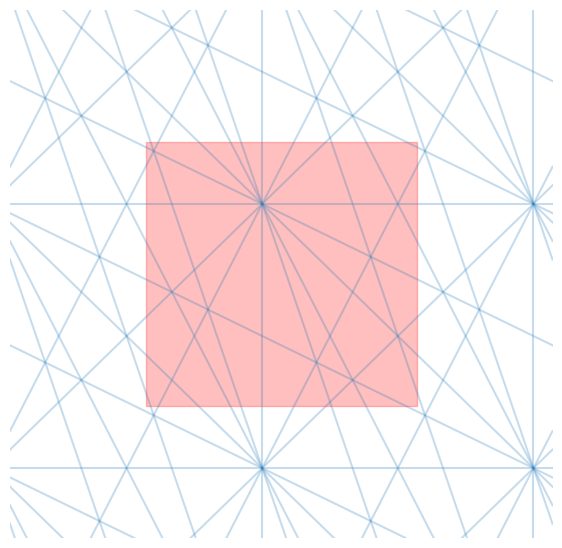}
    \caption{Subembedding chosen for region $R_H$.}
\end{figure}
\begin{figure}[!h]
    \centering
    \includegraphics[scale=0.7]{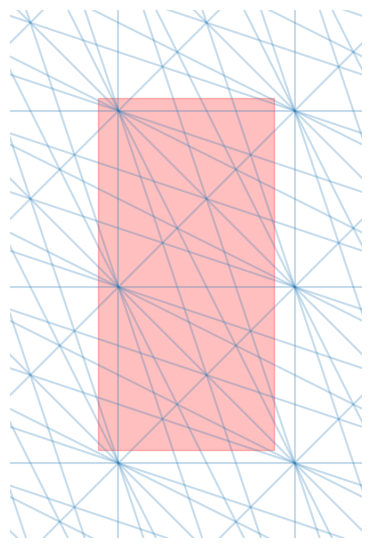}
    \caption{Subembedding chosen for region $R_I$.}
\end{figure}
\begin{figure}[!h]
    \centering
    \includegraphics[scale=0.8]{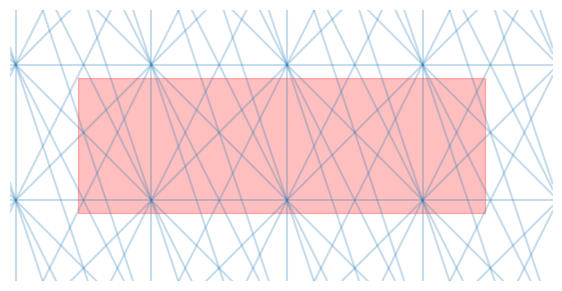}
    \caption{Subembedding chosen for region $R_J$.}
\end{figure}
\begin{figure}[!h]
    \centering
    \includegraphics[scale=0.8]{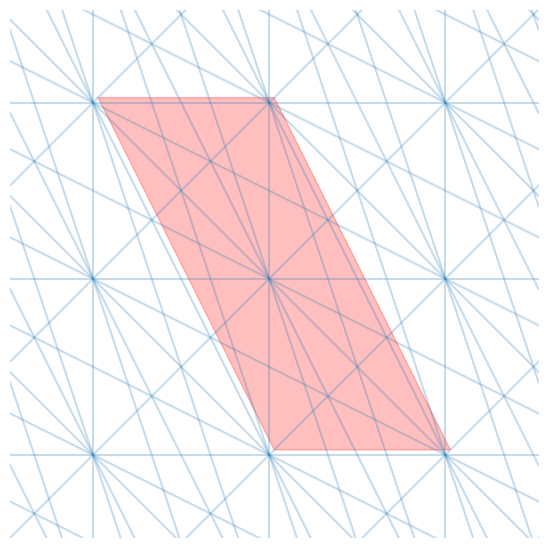}
    \caption{Subembedding chosen for region $R_K$.}
\end{figure}
\begin{figure}[!h]
    \centering
    \includegraphics[scale=0.6]{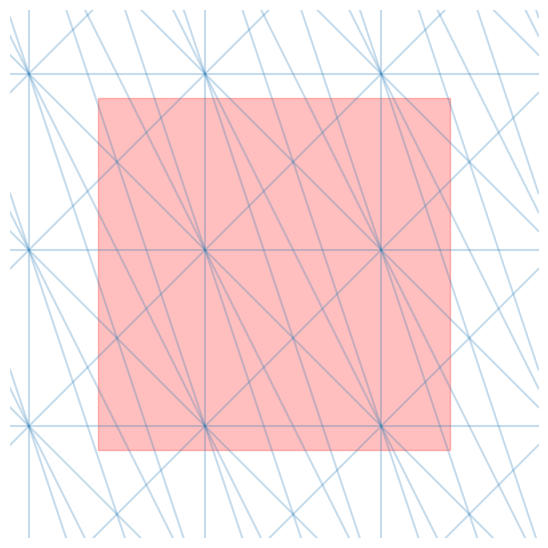}
    \caption{Subembedding chosen for region $R_L$.}
\end{figure}
\begin{figure}[!h]
    \centering
    \includegraphics[scale=0.6]{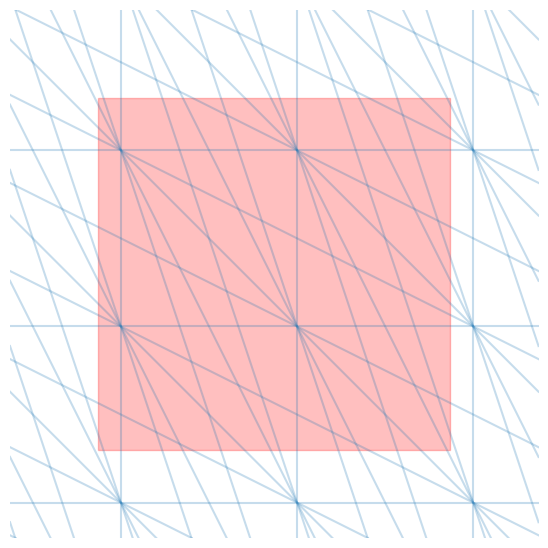}
    \caption{Subembedding chosen for region $R_M$.}
\end{figure}
\begin{figure}[!h]
    \centering
    \includegraphics[scale=0.6]{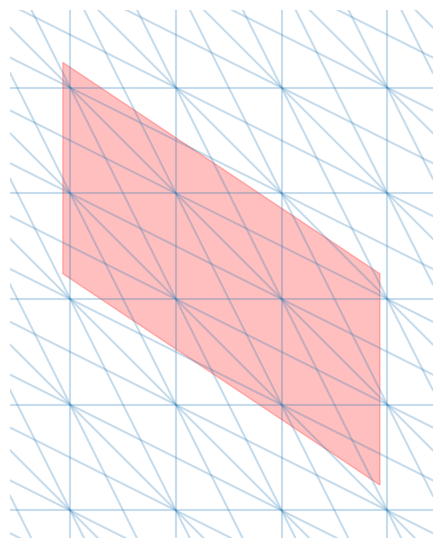}
    \caption{Subembedding chosen for regions $R_N$ and $R_O$ (up to an area preserving linear transformation, the patterns inside these two regions are identical).}
\end{figure}
\begin{figure}[!h]
    \centering
    \includegraphics[scale=0.8]{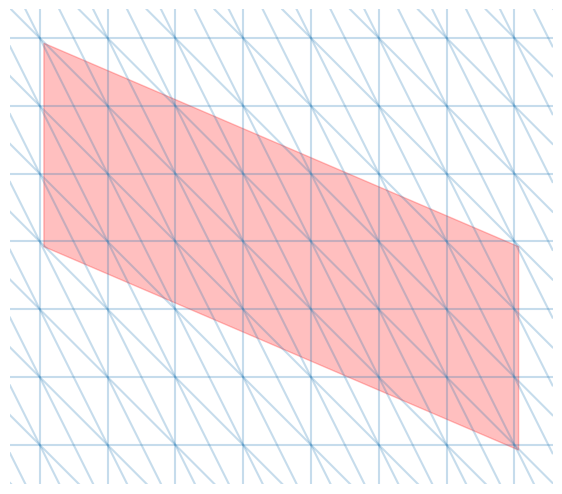}
    \caption{Subembedding chosen for regions $R_P$ and $R_Q$ (up to an area preserving linear transformation, the patterns inside these two regions are identical).}
\end{figure}
\end{document}